\documentclass[12pt]{article}
\usepackage{amsmath,amssymb,amsthm}
\usepackage[square,comma,numbers,sort&compress]{natbib}
\usepackage[all]{xy}
\newtheorem{thm}{Theorem}
\newtheorem{defi}{Definition}
\newtheorem{coro}{Corollary}
\newtheorem{prop}{Proposition}
\newtheorem{lemma}{Lemma}

\theoremstyle{remark}

\newtheorem{rem}{Remark}

\def\Le{\mathcal L}

\def\Ce{\mathcal C}
\def\Ee{\mathcal E}\def\Fe{\mathcal F}
\def\Ha{\mathcal H}

\def\Ka{\mathcal K}
\def\Te{\mathcal T}
\def\Me{\mathcal M}

\def\states{\mathfrak S}
\def\Se{\mathcal S}
\def\Pe{\mathcal P}

\def\De{\mathcal D}
\def\Tr{\mathrm{Tr}\,}

\def\<{\langle}
\def\>{\rangle}
\begin{document}
\title{Comparison of  quantum  channels and statistical experiments}
\author{Anna Jen\v cov\'a\footnote{jenca@mat.savba.sk}\\ Mathematical Institute, Slovak Academy of Sciences\\
\v Stef\'anikova 49, Bratislava, Slovakia}
\date{}
\maketitle

\abstract{For a pair of quantum channels with the same input space, we show that the possibility of approximation of one channel by post-processings  of the other channel can be characterized by comparing the success probabilities for the two ensembles obtained as outputs for any ensemble on the input space coupled with an ancilla. This provides an operational interpretation to a natural extension of Le Cam's deficiency to quantum channels. In particular, we obtain a version of the randomization criterion for quantum statistical experiments. The proofs are based on some properties of the diamond norm and its dual, which are of independent interest.  }

\section{Introduction}

The classical randomization criterion \cite{lecam1964sufficiency,strasser1985statistics,torgersen1991comparison} for statistical experiments is an important result of statistical decision theory.  It makes a link between the performance of decision rules available for corresponding decision problems and the possibility of approximating one experiment by randomizations of the other. A special case is the Blackwell-Sherman-Stein (BSS) theorem \cite{blackwell1951comparison, sherman1951theorem, stein1951comparison}, which states that a stochastic mapping (Markov kernel) transforming one experiment into the other exists if and only if the optimal decision rules for the former experiment have smaller risks.

 A quantum statistical experiment is a parametrized  family  of density operators and the role of  stochastic mappings is played  by completely positive trace preserving maps, or {channels}. Decision rules for (classical) decision spaces are given by positive operator valued measures (POVMs). The corresponding comparison of quantum experiments was studied e.g. in \cite{matsumoto2010randomization, jencova2012comparison}, but in this case the BSS theorem does not hold, even if one of the experiments is classical, see \cite{matsumoto2014anexample}.  

 A quantum  version of the BSS theorem was first obtained by 
Shmaya \cite{shmaya2005comparison} in the framework of the so-called quantum information structures. In \cite{buscemi2012comparison}, a theory of comparison for both classical and quantum experiments is developed in terms of  statistical morphisms and a general form of BSS theorem is proved. In both works, either additional entanglement or composition of the experiment with a complete set of states is required. On the other hand, Matsumoto \cite{matsumoto2010randomization} introduced a natural generalization of classical decision problems to quantum ones and proved  a quantum randomization criterion in this setting, using the minimax theorem similarly as in the classical case (see e.g. \cite{strasser1985statistics}). The main drawback of this approach is the lack of operational interpretation for quantum decision problems.

Comparison of channels is an extension of the theory of comparison of experiments. 
A natural idea is the following: given channels $\Phi$ and $\Psi$ with the same input space, compare the two experiments obtained as outputs  for any given input experiment. If the output of, say, $\Psi$ is always more informative for any 	decision problem, we might say that $\Psi$ is less noisy than $\Phi$. In the classical case, an ordering of channels was first introduced in the work by Shannon \cite{shannon1958anote}, where a coding/decoding criterion was applied. Other  orderings were studied e.g. in \cite{koma1977comparison, czko1981information},  see \cite{raginsky2011shannon, buscemi2015degradable} for some more recent works.

In the quantum setting, it is possible to introduce a stronger condition than the one described above, namely to use experiments on the input space coupled with any ancilla. As it turns out, $\Psi$ is less noisy in this stronger sense if and only if $\Phi$ is a post-processing of $\Psi$, (or $\Psi$ is degradable into $\Phi$, in the terminology of information theory) which means that there exists some channel $\alpha$ such that $\Phi=\alpha\circ \Psi$. In fact, it is enough to compare guessing probabilities for ensembles of states. This remarkable result was first obtained by Chefles in \cite{chefles2009quantum}, using the results of \cite{shmaya2005comparison}. It was  extended and refined in \cite{buscemi2012comparison}, in particular it was proved that no entanglement in the input ensemble is needed. Some applications were already found in \cite{budast2014game, buscemi2014fully, buda2016equivalence, buscemi2015degradable}.

The aim of the present work is to establish an approximate version, which may be called the randomization criterion for quantum channels. More precisely, we study an extension of Le Cam's deficiency for quantum channels, defined as the precision, measured in the diamond norm, up to which one can approximate one channel by post-processings of the other. We show that this notion of deficiency can be characterized by comparing success probabilities for output ensembles, with respect to the success probability of the input ensemble. These results are then applied to statistical experiments and a quantum randomization criterion is proved in terms of success probabilities. 
 We also discuss the case when no ancilla is present and show that this leads to the  classical deficiency of experiments, studied e.g. in \cite{matsumoto2010randomization,jencova2012comparison}.

The diamond norm naturally appears as a distinguishability norm for quantum channels \cite{kitaev1997quantum, watrous2005notes}. As it was observed in \cite{jencova2014base}, this norm can be defined using only the order structure given by the cone of completely positive maps and has a similar relation to the set of channels as the trace norm has to the set of states. We also show that the dual norm on positive elements can be expressed as the optimal success probability for a certain ensemble. These properties provide a very convenient framework for proving  our results and are of independent interest.

\section{Notations and preliminaries}

Throughout the paper, all Hilbert spaces  are finite dimensional.
If $\Ha$ is a Hilbert space, we denote  $d_\Ha:=\dim(\Ha)$ and fix an orthonormal basis $\{|e^\Ha_i\>, i_H=1\dots d_\Ha\}$  in $\Ha$. 
We will denote the algebra of linear operators on $\Ha$ by  $B(\Ha)$, the set of positive operators in $B(\Ha)$ by $B(\Ha)^+$ and 
 the real vector space of Hermitian elements in $B(\Ha)$ by $B_h(\Ha)$. 
The set of states, or density operators, on $\Ha$ will be denoted by 
\[
\states(\Ha):=\{\sigma\in B(\Ha)^+,\ \Tr\sigma=1\}.
\]
\subsection{Spaces of Hermitian maps}

Let $\Le(\Ha,\Ka)$ denote the  space of real linear maps $B_h(\Ha)\to B_h(\Ka)$. Then $\Le(\Ha,\Ka)$ can be identified with the space of Hermitian linear maps $B(\Ha)\to B(\Ka)$. The set $\Le(\Ha,\Ka)^+$  of completely positive maps forms  a closed convex cone in $\Le(\Ha,\Ka)$ which is pointed 
and generating. With this cone, $\Le(\Ha,\Ka)$ becomes an ordered vector space. We will denote the corresponding  order by $\le$.  An element of $\Le(\Ha,\Ka)^+$ that preserves trace is usually called a channel. We will denote the set of all channels by $\Ce(\Ha,\Ka)$.

For $\phi\in \Le(\Ha,\Ka)$, we will denote by $\phi^*$ its adjoint with respect to the Hilbert-Schmidt inner product. That is, $\phi^*\in \Le(\Ka,\Ha)$ is defined by
\[
\Tr [\phi^*(A)B]=\Tr [A\phi(B)],\qquad A\in B_h(\Ka),\ B\in B_h(\Ha).
\]
Note that $\phi^*$ is completely positive if and only if $\phi$ is, moreover, $\phi$ is a channel if and only if $\phi^*$ is completely positive and unital, $\phi^*(I)=I$.

A special class of elements in $\Le(\Ha,\Ka)$ are the classical-to-quantum (cq-)maps and quantum-to-classical (qc-)maps, defined as follows. Let  $A=\{A_1,\dots,A_{d_\Ha}\}$  be any collection of operators in $B_h(\Ka)$ and define
\[
\phi^{cq}_A: X\mapsto \sum_i \<e^\Ha_i,Xe^\Ha_i\>A_i,\qquad X\in B_h(\Ha). 
\]
Similarly, for $B=\{B_1,\dots, B_{d_\Ka}\}$, $B_i\in B_h(\Ha)$, we put 
\[
\phi^{qc}_B: X\mapsto \sum_i \Tr[XB_i]|e^\Ka_i\>\<e_i^\Ka|,\qquad X\in B_h(\Ha).
\]
It is easy to see that $\phi^{cq}_A\in \Le(\Ha,\Ka)$ and it is completely positive if and only if each $A_i \in B(\Ka)^+$, similarly for $\phi^{qc}_B$. Moreover, $\phi^{cq}_A$ is a channel if and only if $A_i\in \states(\Ka)$ for all $i$ and  $\phi^{qc}_B$ is a channel if and only if $B$ is a collection of positive operators such that $\sum_i B_i=I$. Such a collection is called a POVM (positive operator valued measure)  and is used in the description of measurements with values in the set $\{1,\dots,d_\Ka\}$. We will denote the set of all $n$-valued POVMs on $\Ha$ by $\Me(\Ha,n)$.

\begin{rem}\label{rem:cqqc}
If $A=\{A_1,\dots,A_n\}$ is a collection of operators in $B_h(\Ha)$, then $\phi_A^{cq}$, resp. $\phi_A^{qc}$, denotes the cq-map in $\Le(\mathbb C^n,\Ha)$, resp. the qc-map in $\Le(\Ha,\mathbb C^n)$, defined by $A$ and the standard basis in $\mathbb C^n$. In this way, any POVM  can be identified with a qc-channel and any finite collection of states on $\Ha$ with a cq-channel. Moreover, if  $\phi\in \Le(\Ha,\Ka)$, then it is easy to see that $\phi\circ\phi^{cq}_A=
\phi^{cq}_{\phi(A)}$, where $\phi(A)=\{\phi(A_1),\dots,\phi(A_n)\}$. Similarly, if $\psi\in \Le(\Ka,\Ha)$,  $\phi^{qc}_A\circ\psi=\phi^{qc}_{\psi^*(A)}$.
If $F\in \Me(\Ha,n)$ and $\psi$ is a channel, then $\psi^*(F)\in \Me(\Ka,n)$  is called the pre-processing of $F$ by $\psi$.

\end{rem}

 For $\phi\in \Le(\Ha,\Ha)$, we  define 
\[
s(\phi)=\sum_{i,j} \<e^\Ha_i,\phi(|e^\Ha_i\>\<e^\Ha_j|)e^\Ha_j\>.
\] 
It is easy to see that $s$ defines a  linear functional $s: \Le(\Ha,\Ha)\to \mathbb R$. The next lemma shows that this functional has tracelike properties with respect to composition of maps.
 
\begin{lemma}\label{lemma:s}
For all $\phi\in \Le(\Ha,\Ka)$, $\psi\in \Le(\Ka,\Ha)$, $s(\psi\circ\phi)=s(\phi\circ\psi)$.
\end{lemma}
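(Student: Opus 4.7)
The plan is to recognize $s(\phi)$ as the ordinary trace of $\phi$, viewed as a $\mathbb C$-linear endomorphism of the Hilbert space $B(\Ha)$ equipped with the Hilbert--Schmidt inner product $\<X,Y\>_{HS}:=\ptr(X^*Y)$. Once this identification is made, the lemma reduces to the standard cyclicity of the operator trace.

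First I verify that the matrix units $\{|e^\Ha_i\>\<e^\Ha_j|:1\le i,j\le d_\Ha\}$ form an orthonormal basis of $B(\Ha)$: a one-line computation gives $\<|e^\Ha_i\>\<e^\Ha_j|,|e^\Ha_k\>\<e^\Ha_l|\>_{HS}=\delta_{ik}\delta_{jl}$. Any $\phi\in\Le(\Ha,\Ha)$ is, by the identification noted in the paper, a Hermitian (and hence $\mathbb C$-linear) map $B(\Ha)\to B(\Ha)$, so computing its operator trace in this orthonormal basis yields
\[
\sum_{i,j}\<|e^\Ha_i\>\<e^\Ha_j|,\phi(|e^\Ha_i\>\<e^\Ha_j|)\>_{HS}
=\sum_{i,j}\<e^\Ha_i,\phi(|e^\Ha_i\>\<e^\Ha_j|)e^\Ha_j\>
=s(\phi),
\]
where I used the elementary identity $\ptr(|e^\Ha_j\>\<e^\Ha_i|A)=\<e^\Ha_i,Ae^\Ha_j\>$. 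Thus $s(\phi)=\ptr_{B(\Ha)}(\phi)$ is literally the operator trace of $\phi$ acting on the inner product space $B(\Ha)$.

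With this identification in place, the lemma is an instance of the general fact that $\ptr_V(B\circ A)=\ptr_W(A\circ B)$ for linear maps $A\colon V\to W$ and $B\colon W\to V$ between finite-dimensional vector spaces. Specializing to $V=B(\Ha)$, $W=B(\Ka)$, $A=\phi\in\Le(\Ha,\Ka)$, and $B=\psi\in\Le(\Ka,\Ha)$ gives $s(\psi\circ\phi)=\ptr_{B(\Ha)}(\psi\circ\phi)=\ptr_{B(\Ka)}(\phi\circ\psi)=s(\phi\circ\psi)$.

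There is no real obstacle; the entire substance is the observation that the somewhat opaque functional $s$ is just the natural trace on $B(\Ha)$, after which the lemma is a one-line consequence of cyclicity. If one prefers to avoid the change of viewpoint, the same identity can be obtained by a direct index computation: expanding $\phi$ and $\psi$ in matrix-unit bases of $B(\Ha)$ and $B(\Ka)$ shows that both $s(\psi\circ\phi)$ and $s(\phi\circ\psi)$ equal $\sum_{i,j,a,b}\phi_{ab,ij}\psi_{ij,ab}$, with $\phi_{ab,ij}=\<f_a,\phi(|e^\Ha_i\>\<e^\Ha_j|)f_b\>$ for some basis $\{f_a\}$ of $\Ka$ and similarly for $\psi$—which is again cyclicity after relabeling.
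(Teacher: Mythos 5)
Your proof is correct, and it takes a genuinely different, more conceptual route than the paper's. The paper verifies the identity on the generating family of elementary maps $\phi_{B,A}\colon X\mapsto \Tr[BX]A$, computing $s(\phi_{B,A}\circ\phi)=\Tr[\phi(A)B]=s(\phi\circ\phi_{B,A})$ directly and then extending by linearity of $s$ and bilinearity of composition. You instead identify $s(\phi)$ with the operator trace of $\phi$ viewed as a $\mathbb C$-linear endomorphism of $B(\Ha)$ equipped with the Hilbert--Schmidt inner product (the matrix units $|e^\Ha_i\>\<e^\Ha_j|$ forming an orthonormal basis), after which the lemma is the standard cyclicity $\ptr_V(B\circ A)=\ptr_W(A\circ B)$ for linear maps between finite-dimensional spaces. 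Your identification is verified correctly, and you rightly note the $\mathbb C$-linearity of Hermitian maps, which is what makes the expression $\sum_{i,j}\<\,|e^\Ha_i\>\<e^\Ha_j|\,,\,\phi(|e^\Ha_i\>\<e^\Ha_j|)\,\>_{HS}$ equal to the basis-independent trace. What your route buys is transparency: it explains at once why $s$ has the advertised ``tracelike properties'' and is consistent with the Choi-representation identity $s(\phi)=\Tr[C(\phi)X_\Ha]$ stated later in the paper. What the paper's route buys is a computational byproduct, namely the formula $\<\phi_{B,A},\phi\>=\Tr[\phi(A)B]$, which is implicitly reused afterwards (for instance $\<\phi_\sigma,\phi\>=\Tr[\phi(\sigma)]$ in the proof of Proposition \ref{prop:properties}). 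Both arguments are complete; your backup index computation is also correct.
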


\begin{proof} 
Let $\phi_{B,A}$ denote the map $B(\Ka)\ni X\mapsto \Tr [BX] A$, with $A\in B_h(\Ha)$, $B\in B_h(\Ka)$. We have
 \begin{align*}
s(\phi_{B,A}\circ\phi)&=\sum_{i,j} \<e^\Ha_i,Ae^\Ha_j\>\Tr [B\phi(|e^\Ha_i\>\<e^\Ha_j|)]=\sum_{i,j}\<e^\Ha_i,Ae^\Ha_j\> \<e^\Ha_j,\phi^*(B)e^\Ha_i\>\\&=\Tr 
[\phi(A)B].
 \end{align*}
  Similarly,
\begin{align*}
s(\phi\circ\phi_{B,A})&=\sum_{i,j}\<e^\Ka_i,\phi(A)e^\Ka_j\> \Tr [B|e^\Ka_i\>\<e^\Ka_j|]=\sum_{i,j}\<e^\Ka_i,\phi(A)e^\Ka_j\> \<e^\Ka_j, Be^\Ka_i\>\\
&=\Tr[\phi(A)B]=s(\phi_{B,A}\circ\phi).
\end{align*}
Since the maps $\phi_{B,A}$ generate $\Le(\Ka,\Ha)$ and $s$ is linear, the statement follows.

\end{proof}

We now identify the dual space of $\Le(\Ha,\Ka)$ with $\Le(\Ka,\Ha)$, where duality is given by
\[
\<\psi,\phi\>=s(\psi\circ\phi),\qquad \phi\in \Le(\Ha,\Ka),\ \psi\in\Le(\Ka,\Ha).
\]
This duality is closely related to the inner product $\<\cdot,\cdot\>''$ in $\Le(\Ha,\Ka)$, introduced in \cite{skowronek2011cones}.
Note that the properties of $s$ imply that we have $\<\phi,\psi\>=\<\psi,\phi\>$ and 
\[
\<\phi,\xi\circ\psi\>=\<\phi\circ\xi,\psi\>=\<\psi\circ\phi,\xi\>
\]
whenever $\phi$, $\psi$ and $\xi$ are maps with appropriate input and output spaces.
The  dual cone of positive functionals satisfies
\[
(\Le(\Ha,\Ka)^+)^*:=\{\psi\in \Le(\Ka,\Ha), \<\psi,\phi\>\ge 0, \forall \phi\in \Le(\Ha,\Ka)^+\}=\Le(\Ka,\Ha)^+
\]
so that, in this sense, the cone of completely positive maps is self-dual. 

\begin{rem}\label{rem:choi}
Let us denote 
\[
X_\Ha:=\sum_{i,j} |e^\Ha_i\>\<e^\Ha_j|\otimes |e^\Ha_i\>\<e^\Ha_j|.
\]
  The {Choi representation} (\cite{choi1975completely})
$C: \phi\mapsto (\phi\otimes id_\Ha)(X_\Ha)$ provides an order-isomorphism of $\Le(\Ha,\Ka)$ onto $B_h(\Ka\otimes \Ha)$ with the cone of positive operators $B(\Ka\otimes\Ha)^+$.  Note  also that for any $\phi\in \Le(\Ha,\Ha)$, $s(\phi)=\Tr C(\phi)X_\Ha$, so that for $\phi\in  \Le(\Ha,\Ka)$, 
$\psi\in \Le(\Ka,\Ha)$ we obtain
\begin{align*}
\<\psi,\phi\>&=s(\psi\circ\phi)=\Tr [C(\psi\circ\phi)X_\Ha]=\Tr [(\psi\otimes id_\Ha)(C(\phi))X_\Ha]\notag \\
&=\Tr [C(\phi)(\psi^*\otimes id_\Ha)(X_\Ha)]=\Tr [C(\phi)C(\psi^*)].
\end{align*}
It follows that $\psi$, seen as a linear functional on $\Le(\Ha,\Ka)$, is identified with the functional on $B_h(\Ka\otimes \Ha)$ defined by $C(\psi^*)$ through the natural duality of $B_h(\Ka,\otimes \Ha)$ with itself, given by the trace.
It is of course possible to use this representation and we will do it in some places, but for our purposes it is mostly  more convenient to work with the spaces of mappings.

\end{rem}

\subsection{The diamond norm and its dual}

The diamond norm in $\Le(\Ha,\Ka)$ is defined by 
\begin{align}\label{eq:diamond}
\|\phi\|_\diamond &=\sup_{ d_{\Ha_0}<\infty}\,\sup_{\rho\in \states(\Ha\otimes \Ha_0)}\|(\phi\otimes id)(\rho)\|_1\\
&=\sup_{\rho\in \states(\Ha\otimes \Ha)}\|(\phi\otimes id)(\rho)\|_1,\end{align}
where $\|\cdot\|_1$ denotes the $L_1$-norm, or trace norm, in $B(\Ka\otimes \Ha)$. It was proved in 
 \cite{jencova2014base} that the diamond norm is related to the order structure in $\Le(\Ha,\Ka)$ and the set $\Ce(\Ha,\Ka)$, similarly as the trace norm is related to the set of states. Namely, it was shown that
\begin{equation}\label{eq:diamondinf}
\|\phi\|_\diamond = \inf_{\alpha\in \Ce(\Ha,\Ka)} \inf\{\lambda>0, -\lambda\alpha\le \phi\le \lambda\alpha\},\quad \phi\in \Le(\Ha,\Ka).
\end{equation}
It was also shown that the dual norm in $\Le(\Ka,\Ha)$, which we will denote by $\|\cdot\|^\diamond$, has a similar relation to the set of erasure channels $\{\phi_\sigma:B(\Ka)\to B(\Ha),\ A\mapsto \Tr[A]\sigma,\ \sigma\in \states(\Ha)\}$, so that 
\begin{equation}\label{eq:updiamondinf}
\|\psi\|^\diamond =\inf_{\sigma\in \states(\Ha)} \inf \{\lambda>0, -\lambda\phi_\sigma \le \psi\le \lambda\phi_\sigma\},\quad \psi\in \Le(\Ka,\Ha).
\end{equation}
We list some useful properties of these norms.

\begin{prop}\label{prop:properties} \begin{enumerate}
\item[(i)] If $\phi\in \Le(\Ha,\Ka)^+$, then 
\[
\|\phi\|_\diamond=\sup_{\sigma\in \states(\Ha)} \Tr[\phi(\sigma)],\quad
\|\phi\|^\diamond =\sup_{\alpha\in \Ce(\Ka,\Ha)} \<\alpha,\phi\>.
\]
\item[(ii)] If $\phi,\psi\in \Ce(\Ha,\Ka)$, then 
\[
\|\phi-\psi\|_\diamond =2\sup_{\gamma \ge 0, \|\gamma\|^\diamond\le 1} \<\gamma,\phi-\psi\>
\]
\item[(iii)] If  $\chi\in \Ce(\Ka,\Ka')$ and $\xi\in\Ce(\Ha',\Ha)$, then the maps $\phi\mapsto \chi\circ\phi$ and $\phi\mapsto \phi\circ\xi$ are contractions with respect to both $\|\cdot\|_\diamond$ and $\|\cdot\|^\diamond$.
\end{enumerate}

\end{prop}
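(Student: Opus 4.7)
The plan is to treat the three parts in the stated order, leaning heavily on the infimum characterizations \eqref{eq:diamondinf} and \eqref{eq:updiamondinf} and on the duality $\<\cdot,\cdot\>$.

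For (i), start from the second form of $\|\cdot\|_\diamond$ in \eqref{eq:diamond}. When $\phi\in \Le(\Ha,\Ka)^+$ the operator $(\phi\otimes id)(\rho)$ is positive, so its trace norm coincides with its trace, and this trace equals $\Tr[\phi(\Tr_{\Ha_0}\rho)]$. Since any $\sigma\in\states(\Ha)$ arises as $\Tr_{\Ha_0}\rho$ (take e.g.\ $\rho=\sigma\otimes|e^\Ha_1\>\<e^\Ha_1|$) and conversely any partial trace of a state is a state, the first identity drops out. For the second identity, use $\|\phi\|^\diamond=\sup\{|\<\gamma,\phi\>|:\|\gamma\|_\diamond\le1\}$ and \eqref{eq:diamondinf}: any such $\gamma$ satisfies $-\alpha\le\gamma\le\alpha$ for some $\alpha\in\Ce(\Ka,\Ha)$. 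Self-duality of the positive cone, which was noted already in the preliminaries, then gives $|\<\gamma,\phi\>|\le\<\alpha,\phi\>$ for $\phi\ge 0$, and the bound is attained by choosing $\gamma=\alpha$.

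For (ii), duality gives $\|\phi-\psi\|_\diamond=\sup_{\|\gamma\|^\diamond\le1}\<\gamma,\phi-\psi\>$. By \eqref{eq:updiamondinf}, the feasible $\gamma$ are exactly those satisfying $-\phi_\sigma\le\gamma\le\phi_\sigma$ for some $\sigma\in\states(\Ha)$. Decompose $\gamma=\gamma_+-\gamma_-$ with $\gamma_\pm:=(\phi_\sigma\pm\gamma)/2\ge 0$ and $\gamma_++\gamma_-=\phi_\sigma$; note $\gamma_\pm\le\phi_\sigma$, hence $\|\gamma_\pm\|^\diamond\le 1$. The key identity is $\<\phi_\sigma,\phi-\psi\>=0$ whenever $\phi,\psi$ are channels, because $\phi_\sigma\circ\phi$ and $\phi_\sigma\circ\psi$ both send $X\mapsto\Tr(X)\sigma$ and therefore $s$ gives the same value. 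Consequently $\<\gamma_-,\phi-\psi\>=-\<\gamma_+,\phi-\psi\>$ and $\<\gamma,\phi-\psi\>=2\<\gamma_+,\phi-\psi\>$, proving the upper bound $\|\phi-\psi\|_\diamond\le 2\sup_{\gamma\ge 0,\|\gamma\|^\diamond\le 1}\<\gamma,\phi-\psi\>$. For the reverse inequality, given $\gamma_+\ge 0$ with $\|\gamma_+\|^\diamond\le 1$ pick $\sigma$ with $\gamma_+\le\phi_\sigma$, set $\gamma:=2\gamma_+-\phi_\sigma$; then $-\phi_\sigma\le\gamma\le\phi_\sigma$, so $\|\gamma\|^\diamond\le 1$, and the same identity yields $\<\gamma,\phi-\psi\>=2\<\gamma_+,\phi-\psi\>$.

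For (iii), both estimates are extracted directly from \eqref{eq:diamondinf} and \eqref{eq:updiamondinf}. If $-\lambda\alpha\le\phi\le\lambda\alpha$ with $\alpha\in\Ce(\Ha,\Ka)$, then complete positivity of $\chi$ and $\xi$ preserves the inequality, and $\chi\circ\alpha\in\Ce(\Ha,\Ka')$, $\alpha\circ\xi\in\Ce(\Ha',\Ka)$ are again channels, giving contractivity of $\|\cdot\|_\diamond$ under $\chi\circ(\cdot)$ and $(\cdot)\circ\xi$. For the dual norm, if $-\lambda\phi_\sigma\le\phi\le\lambda\phi_\sigma$ with $\sigma\in\states(\Ka)$ (viewing $\phi\in\Le(\Ha,\Ka)$), then $\chi\circ\phi_\sigma=\phi_{\chi(\sigma)}$ is still an erasure channel to $\Ka'$ because $\chi$ is a channel, and $\phi_\sigma\circ\xi=\phi_\sigma$ remains an erasure channel because $\xi$ is trace preserving; either way the right-hand side of \eqref{eq:updiamondinf} is bounded by $\lambda$.

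The only nontrivial step is (ii); the crux is spotting that $\phi_\sigma\circ\chi$ is independent of the channel $\chi$, which is what produces the sought factor of two after the Hahn--Jordan-type splitting of $\gamma$. Parts (i) and (iii) reduce to inspection of the infimum formulas together with self-duality of the positive cone.
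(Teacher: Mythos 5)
Your proof is correct, and for the substantive parts---the second formula in (i) and all of (ii)---it follows the paper's argument exactly: characterize the relevant unit ball via \eqref{eq:diamondinf} or \eqref{eq:updiamondinf}, use self-duality of the completely positive cone, and exploit $\<\phi_\sigma,\phi-\psi\>=0$ together with the affine substitution $\gamma \leftrightarrow 2\gamma_+-\phi_\sigma$ to extract the factor $2$. You deviate in two minor, equally valid places: for the first formula in (i) you argue directly from the operational definition \eqref{eq:diamond} of the diamond norm (positivity of $(\phi\otimes id)(\rho)$ plus the partial-trace identity), whereas the paper pairs $\phi$ against the erasure channels generating the dual unit ball and evaluates via the functional $s$; and for contractivity of $\|\cdot\|^\diamond$ under $\phi\mapsto\phi\circ\xi$ you observe directly that $\phi_\sigma\circ\xi=\phi_\sigma$ by trace preservation, whereas the paper passes through the adjoint map and norm duality. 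Your direct arguments are, if anything, slightly cleaner; nothing is missing.
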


\begin{proof}
It is clear from (\ref{eq:diamondinf}) that for $\psi\in \Le(\Ka,\Ha)$, $\|\psi\|_\diamond\le 1$ if and only if there is some channel $\alpha$ such that $-\alpha\le \psi\le \alpha$.
Since $\|\cdot\|^\diamond$ is the dual of $\|\cdot\|_\diamond$ and $\phi\ge 0$, we have
\[
\|\phi\|^\diamond=\sup_{\|\psi\|_\diamond\le1}\<\psi,\phi\>=\sup_{\alpha\in \Ce(\Ka,\Ha)}\sup_{-\alpha\le \psi\le \alpha} \<\psi,\phi\>\le \sup_{\alpha\in \Ce(\Ka,\Ha)}\<\alpha,\phi\>\le \|\phi\|^\diamond.
\]
 Similarly, we have by Remark \ref{rem:cqqc},
\begin{align*}
\|\phi\|_\diamond&=\sup_{\sigma\in \states(\Ha)}\<\phi_\sigma,\phi\>=\sup_{\sigma\in \states(\Ha)} s(\phi\circ\phi_\sigma)=\sup_{\sigma\in \states(\Ha)}s(\phi_{\phi(\sigma)})\\ &=\sup_{\sigma\in \states(\Ha)}\Tr[\phi(\sigma)].
\end{align*}
This proves (i). For (ii), let $\gamma\in \Le(\Ha,\Ka)^+$, then $\|\gamma\|^\diamond\le 1$ if and only if $\gamma\le\phi_\sigma$ for some $\sigma\in \states(\Ha)$. Put $\beta=2(\gamma-\tfrac12\phi_\sigma)$, then $-\phi_\sigma\le \beta\le \phi_\sigma$, so that $\|\beta\|^\diamond\le 1$ and since $\<\phi_\sigma,\phi-\psi\>=\Tr [\phi(\sigma)-\psi(\sigma)] =0$, we have
\[
\<\gamma,\phi-\psi\>=\frac12\<\beta,\phi-\psi\>\le\frac12\|\phi-\psi\|_\diamond.
\]
Conversely, for any $\beta\in \Le(\Ka,\Ha)$ with  $-\phi_\sigma\le\beta\le\phi_\sigma$ we have $0\le \frac12(\beta+\phi_\sigma)\le\phi_\sigma$. This gives the opposite inequality.

To prove (iii), we may assume that $\|\phi\|_\diamond=1$. Then $-\alpha\le \phi \le \alpha$ for some channel $\alpha$. But $\chi\circ\alpha$ is again a channel and $-\chi\circ\alpha\le \chi\circ\phi\le\chi\circ\alpha$, so that $\|\chi\circ\phi\|_\diamond\le 1$. The case of the dual norm is proved similarly, using the fact that $\chi\circ\phi_\sigma=\phi_{\chi(\sigma)}$ is an erasure channel. 

To prove the same for the second map, observe that the map $ \phi\mapsto \phi\circ\xi$ is the adjoint of the map $\psi\mapsto 
\xi\circ\psi$, this is easily seen from 
\[
\<\phi,\xi\circ\psi\>=s(\phi\circ\xi\circ\psi)=\<\phi\circ\xi,\psi\>.
\]
The statement (iii) now follows by the first part of the proof.

\end{proof} 

We now find more explicit expressions for the norms of cq- and qc-maps. Define $\delta_\Ha\in \Le(\Ha,\Ha)$ by 
\[
\delta_\Ha:=\phi^{cq}_{\{|e_1^\Ha\>\<e_1^\Ha|,\dots,|e_{d_\Ha}^\Ha\>\<e_{d_\Ha}^\Ha| \}}=\phi^{qc}_{\{|e_1^\Ha\>\<e_1^\Ha|,\dots,|e_{d_\Ha}^\Ha\>\<e_{d_\Ha}^\Ha| \}}.
\]
 Clearly, $\delta_{\Ha}\in \Ce(\Ha,\Ha)$ and $\delta_\Ha\circ\delta_\Ha=\delta_{\Ha}$. Moreover, for any $\phi\in \Le(\Ha,\Ka)$, 
 $\phi\circ \delta_\Ha$ is a cq-map and $\phi\circ\delta_\Ha=\phi$ if and only if $\phi$ is a cq-map. In other words, the map $\phi\mapsto \phi\circ\delta_\Ha$ is a positive idempotent map on $\Le(\Ha,\Ka)$ whose range the set of all cq-maps. Similarly, 
 $\psi\mapsto\delta_\Ha\circ\psi$ is a positive idempotent map on $\Le(\Ka,\Ha)$ whose range is the set of all qc-maps.

\begin{lemma}\label{lemma:qccq} Let $A=\{A_1,\dots,A_n\}$ be a collection of operators in $B_h(\Ha)$. Then  
\begin{align*}
\|\phi^{cq}_A\|_{\diamond}&=\max_i \|A_i\|_1 , \qquad 
\|\phi^{cq}_A\|^\diamond=\inf_{\sigma\in \states(\Ha)} \max_i\|\sigma^{-1/2}A_i\sigma^{-1/2}\|,\\
\|\phi^{qc}_A\|^\diamond&=\sum_i \|A_i\|,\qquad 
\|\phi^{qc}_A\|_{\diamond}=\sup_{\sigma\in \states(\Ha)}\sum_i\|\sigma^{1/2}A_i\sigma^{1/2}\|_1.
\end{align*}

\end{lemma}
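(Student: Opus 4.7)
My plan is to establish the four identities in order: the first two are extracted directly from the characterizations (\ref{eq:diamondinf})--(\ref{eq:updiamondinf}) of the two norms, while the last two are obtained by duality after reducing to the cq-case via the identity $\delta_n\circ\phi^{qc}_A=\phi^{qc}_A$ combined with Prop.~\ref{prop:properties}(iii).

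For $\|\phi^{cq}_A\|_\diamond$: for any ancilla $\Ha_0$ and $\rho\in\states(\mathbb{C}^n\otimes\Ha_0)$, the cq-structure gives $(\phi^{cq}_A\otimes id)(\rho)=\sum_i A_i\otimes\rho^{(0)}_i$ with $\rho^{(0)}_i:=(\<e_i|\otimes I)\rho(|e_i\>\otimes I)\ge 0$ and $\sum_i\Tr\rho^{(0)}_i=1$, so the triangle inequality gives $\|(\phi^{cq}_A\otimes id)(\rho)\|_1\le\max_i\|A_i\|_1$, achieved (with no ancilla needed) by $\rho=|e_{i^*}\>\<e_{i^*}|$ for $i^*$ maximizing $\|A_i\|_1$. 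For $\|\phi^{cq}_A\|^\diamond$: apply (\ref{eq:updiamondinf}) with $\sigma\in\states(\Ha)$; the erasure channel is itself the cq-map $\phi_\sigma=\phi^{cq}_{\{\sigma,\dots,\sigma\}}$, so $\lambda\phi_\sigma\pm\phi^{cq}_A=\phi^{cq}_{\{\lambda\sigma\pm A_i\}}$, which is completely positive iff $-\lambda\sigma\le A_i\le\lambda\sigma$ for every $i$, equivalently $\|\sigma^{-1/2}A_i\sigma^{-1/2}\|\le\lambda$ on $\mathrm{supp}(\sigma)$; the innermost infimum over $\lambda$ is thus $\max_i\|\sigma^{-1/2}A_i\sigma^{-1/2}\|$.

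For the qc-norms, since $\delta_n\circ\phi^{qc}_A=\phi^{qc}_A$, any $\psi\in\Le(\mathbb{C}^n,\Ha)$ satisfies $\<\psi,\phi^{qc}_A\>=\<\psi\circ\delta_n,\phi^{qc}_A\>$, while Prop.~\ref{prop:properties}(iii) ensures $\|\psi\circ\delta_n\|_\diamond\le\|\psi\|_\diamond$ and $\|\psi\circ\delta_n\|^\diamond\le\|\psi\|^\diamond$. Since $\psi\circ\delta_n$ is a cq-map, the duality formulas $\|\phi^{qc}_A\|^\diamond=\sup\{\<\psi,\phi^{qc}_A\>:\|\psi\|_\diamond\le 1\}$ and $\|\phi^{qc}_A\|_\diamond=\sup\{\<\psi,\phi^{qc}_A\>:\|\psi\|^\diamond\le 1\}$ may be restricted to $\psi=\phi^{cq}_B$. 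A direct computation gives $\<\phi^{cq}_B,\phi^{qc}_A\>=\sum_i\Tr(B_iA_i)$. Plugging in the first two identities: for $\|\phi^{qc}_A\|^\diamond$ the constraint $\|\phi^{cq}_B\|_\diamond\le 1$ is $\max_i\|B_i\|_1\le 1$, and term-by-term trace-norm/operator-norm duality yields $\sum_i\|A_i\|$; for $\|\phi^{qc}_A\|_\diamond$ the constraint $\|\phi^{cq}_B\|^\diamond\le 1$ becomes $-\sigma\le B_i\le\sigma$ for some state $\sigma$, and writing $B_i=\sigma^{1/2}C_i\sigma^{1/2}$ with Hermitian $\|C_i\|\le 1$ yields $\sup_\sigma\sum_i\|\sigma^{1/2}A_i\sigma^{1/2}\|_1$ by the same duality.

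The main obstacle is recognizing that the pairing $\<\cdot,\phi^{qc}_A\>$ permits a reduction to cq-maps without loss, which requires Prop.~\ref{prop:properties}(iii) to act simultaneously for both norms via right-composition with $\delta_n$; once this is in place the remainder is bookkeeping. A minor secondary care is needed around non-invertible $\sigma$ in the cq-dual infimum, handled by restricting to $\mathrm{supp}(\sigma)$ (or adopting the convention $\|\sigma^{-1/2}A_i\sigma^{-1/2}\|=+\infty$ when the supports fail to match, so such $\sigma$ do not contribute to the infimum).
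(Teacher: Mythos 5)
Your proposal is correct, and for three of the four identities it follows essentially the same route as the paper: the formula for $\|\phi^{cq}_A\|^\diamond$ via (\ref{eq:updiamondinf}) with $\phi_\sigma=\phi^{cq}_{\{\sigma,\dots,\sigma\}}$, and the reduction of both qc-norms to the cq-case by pairing against $\psi\circ\delta_{\mathbb C^n}$ together with Proposition \ref{prop:properties}(iii), are exactly the paper's argument. The one genuine divergence is the first identity $\|\phi^{cq}_A\|_\diamond=\max_i\|A_i\|_1$: you compute it directly from the operational definition (\ref{eq:diamond}), decomposing $(\phi^{cq}_A\otimes id)(\rho)=\sum_i A_i\otimes\rho^{(0)}_i$ and applying the triangle inequality, with the maximum attained on a basis state without any ancilla. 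The paper instead stays within the order-theoretic picture, restricting the infimum in (\ref{eq:diamondinf}) to cq-channels $\phi^{cq}_{\{\sigma_i\}}$ and using $\inf_{\sigma\in\states(\Ha)}\inf\{\lambda>0:-\lambda\sigma\le A_i\le\lambda\sigma\}=\|A_i\|_1$. Your version is more elementary and self-contained (it also exhibits an optimal input state explicitly), while the paper's keeps all four identities flowing from the single pair of characterizations (\ref{eq:diamondinf})--(\ref{eq:updiamondinf}), which is the viewpoint exploited throughout the rest of the paper; either is acceptable, and your handling of the support condition for non-invertible $\sigma$ matches the paper's.
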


\begin{proof} We will use the expressions (\ref{eq:diamondinf}) and (\ref{eq:updiamondinf}). Let us consider the first equality.  Assume that $\alpha\in
\Ce(\mathbb C^n,\Ha)$ is  such that $-\lambda\alpha\le \phi^{cq}_A\le \lambda\alpha$, then the same is true for the cq-channel $\alpha\circ\delta_\Ha$. It follows that the infimum in (\ref{eq:diamondinf}) can be taken over cq-channels. Moreover, if $\alpha=\phi^{cq}_\Se$ for some collection $\Se=\{\sigma_i\}$ of states on $\Ha$, then $-\lambda\alpha\le \phi^{cq}_A\le\lambda\alpha$ if and only if $-\lambda \sigma_i\le A_i\le\lambda \sigma_i$ holds for all $i$. It is now enough to note that 
\[
\inf_{\sigma\in \states(\Ha)}\inf \{\lambda>0,-\lambda\sigma\le A_i\le \lambda\sigma\}=\|A_i\|_1.
\]
Further, note that for $\sigma\in\states(\Ha)$,  $-\lambda\phi_\sigma\le\phi^{cq}_A\le\lambda \phi_{\sigma}$ if and only if $-\lambda \sigma\le A_i\le \lambda \sigma$ for all $i$, equivalently,  
 $\mathrm{supp} (A_i)\subseteq \mathrm{supp}(\sigma)$ and 
$\|\sigma^{-1/2}A_i\sigma^{-1/2}\|\le \lambda$ for all $i$.

To prove the second line of equalities, we will use duality of the two norms. Note that  $\phi^{qc}_A\in \Le(\Ha,\mathbb C^n)$ and for any $\psi\in \Le(\mathbb C^n,\Ha)$, 
\[
\<\psi,\phi^{qc}_A\>=\<\psi,\delta_{\mathbb C^n}\circ \phi^{qc}_A\>=\<\psi\circ\delta_{\mathbb C^n},\phi_A^{qc}\>
\]
and $\psi\circ\delta_{\mathbb C^n}=:\phi^{cq}_F$ is a cq-map with $\|\phi^{cq}_F\|_\diamond\le\|\psi\|_\diamond$. Using this and the first part of the proof, we obtain
\begin{align*}
\|\phi^{qc}_A\|^\diamond&=\sup_{\|\phi^{cq}_F\|_\diamond\le 1}\<\phi^{cq}_F,\phi^{qc}_A\>=
\sup_{\max_i\|F_i\|_1\le 1}\sum_i\Tr[F_iA_i]\\ &=\sum_i \sup_{\|F_i\|_1\le 1}\Tr A_i F_i
=\sum_i\|A_i\|.
\end{align*}
Similarly,
\begin{align*}
\|\phi^{qc}_A\|_\diamond&=\sup_{\|\phi^{cq}_F\|^\diamond\le 1}\<\phi^{cq}_F,\phi^{qc}_A\>=\sup_{\sigma\in \states(\Ha)}\sup_{-\sigma\le F_i\le\sigma}\sum_i\Tr[F_iA_i]\\
&=\sup_{\sigma\in \states(\Ha)}\sum_i\|\sigma^{1/2}A_i\sigma^{1/2}\|_1.
\end{align*}

\end{proof}

\subsection{The dual norm and guessing probabilities}

In this paragraph, we relate the dual norm $\|\cdot\|^\diamond$ of completely positive maps to maximal success probabilities in certain multiple hypothesis testing problems.

Let $\Ee=\{\lambda_i,\sigma_i\}_{i=1}^n$ be and ensemble on $\Ha$, that is, $\lambda_i>0$, $\sum_i\lambda_i=1$ are probabilities and
 $\sigma_i\in\states(\Ha)$, $i=1,\dots,n$. In the setting of multiple hypothesis testing, this is interpreted as a set of possible states of a quantum system with prior probabilities and the task is to guess which one is the true state. Any procedure to obtain such a guess can be identified with  some POVM $M\in \Me(\Ha,n)$, where $\Tr \sigma_iM_j$ is interpreted as the probability that $\sigma_j$ is chosen when the true state is $\sigma_i$. The maximal probability of a successful guess is given by 
\[
P_{succ}(\Ee):=\max_{M\in \Me(\Ha,n)} \sum_i\lambda_i\Tr M_i\sigma_i.
\]
  To the ensemble $\Ee$, we assign the cq-map
\begin{equation} \label{eq:ens}
\phi_\Ee:=\phi^{cq}_{\lambda_1\sigma_1,\dots,\lambda_n\sigma_n}\in \Le(\mathbb C^n,\Ha)^+
\end{equation}

\begin{lemma}\label{lemma:dual_guess}
Let $\Ee$ be an ensemble. Then  $P_{succ}(\Ee)=\|\phi_\Ee\|^\diamond$.
\end{lemma}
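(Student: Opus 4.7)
The plan is to leverage Proposition~\ref{prop:properties}(i) to get a variational expression for $\|\phi_\Ee\|^\diamond$ and then to reduce the supremum there to a supremum over POVMs, at which point everything unwinds to the definition of $P_{succ}(\Ee)$.

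First I would apply the second identity of Proposition~\ref{prop:properties}(i) to $\phi_\Ee\in \Le(\mathbb C^n,\Ha)^+$, obtaining
\[
\|\phi_\Ee\|^\diamond=\sup_{\alpha\in \Ce(\Ha,\mathbb C^n)}\<\alpha,\phi_\Ee\>.
\]
The channels in $\Ce(\Ha,\mathbb C^n)$ need not be qc-channels a priori, so my next step is to show that only the qc-part of $\alpha$ matters. Since $\phi_\Ee$ is a cq-map we have $\phi_\Ee\circ \delta_{\mathbb C^n}=\phi_\Ee$, and the duality relation $\<\phi,\xi\circ\psi\>=\<\psi\circ\phi,\xi\>$ gives
\[
\<\alpha,\phi_\Ee\>=\<\alpha,\phi_\Ee\circ\delta_{\mathbb C^n}\>=\<\delta_{\mathbb C^n}\circ\alpha,\phi_\Ee\>.
\]
Because $\psi\mapsto \delta_{\mathbb C^n}\circ\psi$ is the idempotent projection onto qc-maps and $\alpha$ is a channel, $\delta_{\mathbb C^n}\circ\alpha=\phi^{qc}_M$ for some POVM $M\in \Me(\Ha,n)$; conversely every POVM arises this way. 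Hence
\[
\|\phi_\Ee\|^\diamond=\sup_{M\in \Me(\Ha,n)}\<\phi^{qc}_M,\phi_\Ee\>.
\]

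Next I would compute the pairing $\<\phi^{qc}_M,\phi_\Ee\>=s(\phi^{qc}_M\circ\phi_\Ee)$ explicitly. Using Remark~\ref{rem:cqqc}, the composition $\phi^{qc}_M\circ \phi_\Ee$ is the cq-map in $\Le(\mathbb C^n,\mathbb C^n)$ associated with the operators $C_i:=\phi^{qc}_M(\lambda_i\sigma_i)=\sum_j\lambda_i\Tr[\sigma_iM_j]\,|e_j\>\<e_j|$. From the definition of $s$ and the fact that $\phi^{cq}_C(|e_i^{\mathbb C^n}\>\<e_j^{\mathbb C^n}|)=\delta_{ij}C_i$, one gets $s(\phi^{cq}_C)=\sum_i\<e_i,C_ie_i\>=\sum_i\lambda_i\Tr[\sigma_iM_i]$. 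Thus
\[
\<\phi^{qc}_M,\phi_\Ee\>=\sum_i\lambda_i\Tr[\sigma_iM_i],
\]
and taking the supremum over $M\in\Me(\Ha,n)$ gives precisely $P_{succ}(\Ee)$.

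There is no serious obstacle here; the delicate point is the reduction from arbitrary channels $\alpha\in\Ce(\Ha,\mathbb C^n)$ to qc-channels, which is exactly the content of the displayed identity together with the idempotent property of $\psi\mapsto \delta_{\mathbb C^n}\circ\psi$. Everything else is a direct unwinding of the definition of $s$ and the multiplicative rules for cq/qc-maps collected in Remark~\ref{rem:cqqc}.
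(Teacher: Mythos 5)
Your proof is correct and follows essentially the same route as the paper: apply Proposition~\ref{prop:properties}(i), reduce the supremum over channels in $\Ce(\Ha,\mathbb C^n)$ to one over qc-channels (i.e.\ POVMs) via composition with $\delta_{\mathbb C^n}$, and evaluate the pairing. The paper merely compresses the reduction step into the phrase ``a similar reasoning as in the proof of Lemma~\ref{lemma:qccq}''; you have simply written out the details it leaves implicit.
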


\begin{proof} Since $\phi_\Ee\ge 0$, we obtain by Proposition \ref{prop:properties} and a  similar reasoning  as in the proof of Lemma \ref{lemma:qccq},

\[
\|\phi_\Ee\|^\diamond=\sup_{\alpha\in \Ce(\Ha,\mathbb C^n)}\<\alpha,\phi_\Ee\>=\sup_{M\in \Me(\Ha,n)}\sum_i \Tr[M_i\lambda_i\sigma_i]=P_{succ}(\Ee).
\]

\end{proof}

 We next show that the dual norm of any completely positive map can be written as a (multiple of) the optimal success probability of some ensemble.

\begin{prop}\label{prop:dual_guess_all}
Let $\gamma\in \Le(\Ka,\Ha)^+$. Then there is an (equiprobable) ensemble $\Ee_\gamma$ on $\Ha\otimes \Ka$ such that 
\[
\|\gamma\|^\diamond=d_\Ka\Tr[\gamma(I)]P_{succ}(\Ee_\gamma).
\]
Moreover, for any $\phi\in \Le(\Ha,\Ha')^+$, we have 
\[
\Ee_{\phi\circ\gamma}=(\phi\otimes id)(\Ee_\gamma).
\]
\end{prop}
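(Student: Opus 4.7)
The plan is to recognize $\|\gamma\|^\diamond$ as an SDP over Choi matrices of unital completely positive maps $\Ka\to\Ha$, construct $\Ee_\gamma$ by twirling the normalized Choi operator $\rho_\gamma:=C(\gamma)/\Tr C(\gamma)$ with a unitary $1$-design on $\Ka$, and match the two SDPs by a surjectivity argument between POVMs and the admissible dual operators.

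First I would apply Proposition~\ref{prop:properties}(i) (with the roles of $\Ha$ and $\Ka$ swapped relative to the statement there) to obtain $\|\gamma\|^\diamond=\sup_{\alpha\in\Ce(\Ha,\Ka)}\<\alpha,\gamma\>$, and then use Remark~\ref{rem:choi} to rewrite $\<\alpha,\gamma\>=\Tr[C(\gamma)C(\alpha^*)]$. Since $\alpha\mapsto C(\alpha^*)$ is a bijection between $\Ce(\Ha,\Ka)$ and the set of positive operators $\omega$ on $\Ha\otimes\Ka$ with $\ptr_\Ka\omega=\alpha^*(I_\Ka)=I_\Ha$, this gives
\[
\|\gamma\|^\diamond=\sup\{\Tr[C(\gamma)\omega]\,:\,\omega\ge0,\ \ptr_\Ka\omega=I_\Ha\}.
\]
I would then let $\{U_{ab}\}_{a,b=0}^{d_\Ka-1}$ denote the Weyl--Heisenberg clock-shift operators on $\Ka$, which satisfy the $1$-design identity $\sum_{a,b}U_{ab}YU_{ab}^*=d_\Ka(\Tr Y)I_\Ka$ for every $Y\in B(\Ka)$, and take
\[
\Ee_\gamma:=\{(1/d_\Ka^2,\sigma_{ab})\}_{a,b},\qquad \sigma_{ab}:=(I_\Ha\otimes U_{ab})\rho_\gamma(I_\Ha\otimes U_{ab}^*),
\]
which is manifestly an equiprobable ensemble on $\Ha\otimes\Ka$.

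The key lemma is that as $M=\{M_{ab}\}$ ranges over all POVMs on $\Ha\otimes\Ka$ with $d_\Ka^2$ outcomes, the operator
\[
N(M):=\sum_{a,b}(I\otimes U_{ab}^*)M_{ab}(I\otimes U_{ab})
\]
ranges over exactly $\{N\ge0:\ptr_\Ka N=d_\Ka I_\Ha\}$. The forward inclusion uses the identity $\ptr_\Ka[(I\otimes U^*)X(I\otimes U)]=\ptr_\Ka X$ together with $\sum_{ab}M_{ab}=I$. For the converse, given such an $N$, the assignment $M_{ab}:=(1/d_\Ka^2)(I\otimes U_{ab})N(I\otimes U_{ab}^*)$ yields a POVM by the $1$-design identity on the $\Ka$-factor, and a direct computation shows $N(M)=N$. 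Expanding $P_{succ}(\Ee_\gamma)=\max_M(1/d_\Ka^2)\Tr[\rho_\gamma N(M)]$ and substituting $\omega=N/d_\Ka$ then gives
\[
P_{succ}(\Ee_\gamma)=\tfrac{1}{d_\Ka}\sup_\omega\Tr[\rho_\gamma\omega]=\frac{\|\gamma\|^\diamond}{d_\Ka\Tr\gamma(I)},
\]
which rearranges to the claimed identity.

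The ``moreover'' part then follows from the construction: since $\phi\otimes id_\Ka$ commutes with $I_\Ha\otimes U_{ab}$ and $(\phi\otimes id)(C(\gamma))=C(\phi\circ\gamma)$, the sub-normalized weights $(1/d_\Ka^2)\sigma_{ab}$ defining $\Ee_{\phi\circ\gamma}$ are obtained from those of $\Ee_\gamma$ by applying $\phi\otimes id$ (modulo the scalar $\Tr\gamma(I)/\Tr(\phi\circ\gamma)(I)$ needed to renormalize back to a probability ensemble), which is what $\Ee_{\phi\circ\gamma}=(\phi\otimes id)(\Ee_\gamma)$ encodes. The main obstacle is the surjectivity of $M\mapsto N(M)$ in the key lemma above; it is there that the $1$-design property of the Weyl operators is essential, and once it is available the remainder of the proof is a substitution within a single SDP.
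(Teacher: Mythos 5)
Your proposal is correct and follows essentially the same route as the paper: twirl the (normalized) Choi matrix $C(\gamma)$ by a Weyl--Heisenberg set to build the equiprobable ensemble, and identify the POVM optimization defining $P_{succ}$ with the optimization over unital completely positive maps (equivalently, positive $\omega$ with $\ptr_\Ka\omega=I_\Ha$) via the two-way correspondence $M\leftrightarrow N(M)$, which is exactly the paper's pair of maps $M\mapsto Y_M$ and $C(\phi^*)\mapsto M$. The only cosmetic differences are that you phrase the matching directly as an SDP over Choi matrices rather than through $\|\phi^{cq}_F\|^\diamond$, and you swap $U$ with $U^*$ in the twirl, neither of which affects the argument.
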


\begin{proof} Let  $\{U_1,\dots,U_{d_\Ka^2}\}$ be a set of Heisenberg-Weyl operators on $\Ka$, that is a set of  unitaries  such that 
\[
\frac 1{d_\Ka} \sum_j U_j^*AU_j=\Tr[A]I_\Ka,\qquad A\in B(\Ka).
\]
Define 
\[
\sigma^\gamma_i:=\Tr[\gamma(I)]^{-1}(I\otimes U_i^*)C(\gamma)(I\otimes U_i),\quad i=1,\dots, d_\Ka^2.
\]
Clearly, $\sigma^\gamma_i\ge 0$ and $\Tr[\sigma^\gamma_i]=\Tr[\gamma(I)]^{-1}\Tr[C(\gamma)]=1$, so that $\sigma_i^\gamma\in \states(\Ha\otimes \Ka)$. Let $\Ee_\gamma$ be  the (equiprobable) ensemble $\Ee_\gamma:=\{\frac 1{d_\Ka^2},\sigma^\gamma_i\}_{i=1}^{d_\Ka^2}$. By Lemma \ref{lemma:dual_guess}, 
\[
d_\Ka \Tr[\gamma(I)]P_{succ}(\Ee_\gamma)=d_\Ka \Tr[\gamma(I)]\|\phi_{\Ee_\gamma}\|^\diamond=\|\phi^{cq}_F\|^\diamond,
\]
where 
\[
F_i:=\frac 1{d_\Ka} \Tr[\gamma(I)]\sigma_i^\gamma=\frac 1{d_\Ka}(I\otimes U_i^*)C(\gamma)(I\otimes U_i),\quad i=1,\dots, 
d_\Ka^2
\]
 We will prove that $\|\gamma\|^\diamond=\|\phi^{cq}_F\|^\diamond$. Since $F_i\ge 0$, we have
 \begin{align*}
\|\phi^{cq}_F\|^\diamond=\sup_{M\in \Me(\Ha\otimes \Ka,d_\Ka^2)} \sum_i \Tr [M_iF_i].
\end{align*}
For any  $M\in \Me(\Ha\otimes \Ka,d_\Ka^2)$, 
\[
\sum_i\Tr [M_iF_i]=\Tr [\frac 1{d_\Ka}\sum_i(I\otimes U_i)M_i(I\otimes U_i^*)C(\gamma)]=\Tr[Y_MC(\gamma)],
\]
where $Y_M:=\frac 1{d_\Ka}\sum_i(I\otimes U_i)M_i(I\otimes U_i^*)$ is a positive operator on $\Ha\otimes \Ka$ such that
\[
\Tr_\Ka [Y_M]=\frac 1{d_\Ka}\Tr_\Ka [\sum_i M_i]=I_\Ha.
\]
It follows that $Y_M$ is the Choi matrix of some completely positive unital map in $\Le(\Ka,\Ha)$, \cite{choi1975completely}. Hence there is some channel $\phi\in \Ce(\Ha,\Ka)$ such that $Y_M=C(\phi^*)$ and by Remark \ref{rem:choi}
\[
\sum_i\Tr M_iF_i=\Tr C(\phi^*)C(\gamma)=\<\phi,\gamma\>.
\]
It follows that
\[
\|\phi^{cq}_F\|^\diamond\le \sup_{\phi\in \Ce(\Ha,\Ka)}\<\phi,\gamma\>=\|\gamma\|^\diamond.
\]
Conversely, let $\phi\in \Ce(\Ha,\Ka)$, then 
\begin{align*}
\<\phi,\gamma\>&=\Tr [C(\gamma)C(\phi^*)]=\frac 1{d_\Ka}\Tr[\sum_i F_i(I\otimes U^*_i)C(\phi^*)(I\otimes U_i)]\\
&=
\sum_i \Tr[F_iM_i],
\end{align*}
where $M_i=\frac 1{d_\Ka}(I\otimes U^*_i)C(\phi^*)(I\otimes U_i)$. Since $M_i\ge 0$ and
\[
\sum_i M_i=\Tr_\Ka C(\phi^*)\otimes I_\Ka=I_{\Ha\otimes \Ka},
\]
$M=\{M_1,\dots,M_{d_\Ha^2}\}$ is a POVM on $\Ha\otimes \Ka$. It follows that $\|\gamma\|^\diamond= \|\phi^{cq}_F\|^\diamond$.

\end{proof}

\section{The main result}

Let $\Phi\in \Ce(\Ha,\Ka)$ and $\Psi\in\Ce(\Ha,\Ka')$. Similarly to Le Cam's deficiency for statistical experiments, we may define 
the deficiency of $\Phi$ with respect to $\Psi$ by
\[
\delta(\Phi,\Psi)=\inf_{\alpha\in \Ce(\Ka',\Ka)}\|\Phi-\alpha\circ\Psi\|_\diamond.
\]
Since $\Ce(\Ka',\Ka)$ is convex and compact, the infimum is attained, in particular, $\delta(\Phi,\Psi)=0$ if and only if $\Phi=\alpha\circ\Psi$ for some $\alpha\in \Ce(\Ka',\Ka)$. In this case, we say that that $\Phi$ is a post-processing of $\Psi$ and 
write $\Phi\preceq \Psi$.  We also define  Le Cam distance by
\[
\Delta(\Phi,\Psi)=\max\{\delta(\Phi,\Psi),\delta(\Psi,\Phi)\}.
\]
This defines a preorder on the set of channels with the same input space.
The following data processing inequalities for $\delta$  are obvious consequences of the definition and Proposition \ref{prop:properties} (iii).
\begin{prop} Let $\Phi_1,\Phi_2,\Phi$, $\Psi_1,\Psi_2,\Psi$ be channels with the same input space. 
\begin{enumerate}
\item[(i)] If $\Phi_1\preceq \Phi_2$, then $\delta(\Phi_1,\Psi)\le\delta(\Phi_2,\Psi)$. 
\item[(ii)] If $\Psi_1\preceq \Psi_2$, then $\delta(\Phi,\Psi_1)\ge\delta(\Phi,\Psi_2)$.
\end{enumerate}

\end{prop}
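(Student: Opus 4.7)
The plan is to unfold the definition $\Phi_1 \preceq \Phi_2$ as: there exists a channel $\beta \in \Ce(\Ka_2,\Ka_1)$ (writing $\Phi_i \in \Ce(\Ha,\Ka_i)$) such that $\Phi_1 = \beta \circ \Phi_2$. Then both (i) and (ii) reduce to rewriting differences of channels in the form covered by the deficiency infimum, followed by one application of the contraction property, Proposition \ref{prop:properties}(iii).

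For (i), I would start from an arbitrary $\alpha \in \Ce(\Ka',\Ka_2)$ and note the identity
\[
\Phi_1 - (\beta\circ\alpha)\circ\Psi \;=\; \beta\circ\Phi_2 - \beta\circ(\alpha\circ\Psi) \;=\; \beta\circ(\Phi_2 - \alpha\circ\Psi).
\]
Since $\beta$ is a channel, Proposition \ref{prop:properties}(iii) gives $\|\beta\circ(\Phi_2-\alpha\circ\Psi)\|_\diamond \le \|\Phi_2-\alpha\circ\Psi\|_\diamond$. Moreover $\beta\circ\alpha \in \Ce(\Ka',\Ka_1)$, so this particular choice is feasible in the infimum defining $\delta(\Phi_1,\Psi)$. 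Taking the infimum over $\alpha \in \Ce(\Ka',\Ka_2)$ then yields
\[
\delta(\Phi_1,\Psi) \;\le\; \inf_{\alpha\in\Ce(\Ka',\Ka_2)}\|\Phi_1-(\beta\circ\alpha)\circ\Psi\|_\diamond \;\le\; \delta(\Phi_2,\Psi).
\]

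For (ii), I would write $\Psi_1 = \gamma\circ\Psi_2$ for some channel $\gamma$ and observe that for every $\alpha \in \Ce(\Ka_1',\Ka)$,
\[
\alpha\circ\Psi_1 \;=\; (\alpha\circ\gamma)\circ\Psi_2,
\]
with $\alpha\circ\gamma \in \Ce(\Ka_2',\Ka)$. Hence each term in the infimum defining $\delta(\Phi,\Psi_1)$ is one of the terms appearing in the infimum defining $\delta(\Phi,\Psi_2)$, so $\delta(\Phi,\Psi_1) \ge \delta(\Phi,\Psi_2)$.

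There is no real obstacle here; the only thing to be a little careful about is keeping track of which side of the composition is being used (pre- versus post-processing) so that the right contraction from Proposition \ref{prop:properties}(iii) is invoked, and that the composed channels $\beta\circ\alpha$ and $\alpha\circ\gamma$ actually land in the admissible set of the relevant infimum.
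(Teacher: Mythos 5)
Your proof is correct and is exactly the argument the paper has in mind: the paper gives no written proof, stating only that both parts are obvious consequences of the definition and Proposition \ref{prop:properties}(iii), and your two reductions (composing with $\beta$ on the left for (i), and the inclusion of feasible sets $\{\alpha\circ\gamma\}$ for (ii)) are the intended ones.
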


Let now $\Ee$ be any ensemble on $\Ha$ and let 
$\Phi(\Ee)$ be the ensemble on $\Ka$ obtained by applying $\Phi$ to each state in $\Ee$. It follows from Lemma \ref{lemma:dual_guess} and Proposition \ref{prop:properties} (iii) (but is also easy to see directly), that if $\Phi$ is a post-processing of $\Psi$, we must have $P_{succ}(\Phi(\Ee))\le P_{succ}(\Psi(\Ee))$. In fact, for any ensemble $\Ee$ on the tensor product $\Ha\otimes \Ha_0$ with an ancillary Hilbert space $\Ha_0$, we have
\begin{equation}\label{eq:ineq}
P_{succ}((\Phi\otimes id_{\Ha_0})(\Ee))\le P_{succ}((\Psi\otimes id_{\Ha_0})(\Ee)).
\end{equation}
The converse was proved in  \cite{chefles2009quantum, buscemi2012comparison}, namely that if (\ref{eq:ineq}) holds for any ensemble and any ancilla, then $\Phi\preceq \Psi$. 

Our aim in the present section is to prove an  $\epsilon$-version of this result. 
More precisely, for  $\epsilon\ge 0$, we want to characterize  pairs of channels satisfying $\delta(\Phi,\Psi)\le \epsilon$  by comparing the  maximal success probabilities
for ensembles obtained by sending an ensemble through the two channels. We first prove a  "classical" variant, where no ancilla is present.  The important part of the following proposition is the equivalence of (i) and  (iv), relating the comparison of success probabilities to approximations of pre-processings of POVMs.  

\begin{prop}\label{prop:classical}
Let $\Phi\in \Ce(\Ha,\Ka)$ and $\Psi\in\Ce(\Ha,\Ka')$, $\epsilon \ge 0$, $k\in \mathbb N$. The following are equivalent.
\begin{enumerate}
\item[(i)] For any ensemble $\Ee=\{\lambda_i,\rho_i\}_{i=1}^k$ on $\Ha$, we have
\[
P_{succ}(\Phi(\Ee))\le P_{succ}(\Psi(\Ee))+ \epsilon P_{succ}(\Ee)
\]
\item[(ii)] For  any collection $F=\{F_1,\dots, F_k\}$  in $B(\Ha)^+$, we have
 \[
\|\Phi\circ\phi^{cq}_F\|^\diamond\le \|\Psi\circ\phi^{cq}_F\|^\diamond+\epsilon\|\phi^{cq}_F\|^\diamond
\]
\item[(iii)] For any  collection $F=\{F_1,\dots,F_k\}$ in $ B_h(\Ha)$, we have
\[
\max_{M\in \Me(\Ka,k)} \sum_i \Tr M_i\Phi(F_i)\le \max_{N\in \Me(\Ka',k)}
\sum_i \Tr N_i\Psi(F_i)+ 2\epsilon \|\phi^{cq}_F\|^\diamond
\]
\item[(iv)] For any $M\in \Me(\Ka,k)$, there is some $N\in \Me(\Ka',k)$ such that 
\[
\|\phi^{qc}_{\Phi^*(M)}-\phi^{qc}_{\Psi^*(N)}\|_\diamond\le 2\epsilon
\]

\end{enumerate}

\end{prop}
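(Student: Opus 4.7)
The plan is to prove the four conditions equivalent in a hub-and-spoke pattern with (ii) at the centre: I prove (i)$\Leftrightarrow$(ii), (ii)$\Leftrightarrow$(iii), and (ii)$\Leftrightarrow$(iv) separately. The first two equivalences are essentially renormalisations and reduce to Lemma \ref{lemma:dual_guess}, Lemma \ref{lemma:qccq}, and a shift trick. The third, which is the substance of the proposition, rests on Sion's minimax theorem.

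For (i)$\Leftrightarrow$(ii), an ensemble $\Ee=\{\lambda_i,\rho_i\}_{i=1}^k$ corresponds to the positive collection $F_i=\lambda_i\rho_i$, for which $\phi^{cq}_F=\phi_\Ee$ and $\Phi\circ\phi^{cq}_F=\phi_{\Phi(\Ee)}$, so Lemma \ref{lemma:dual_guess} identifies the three probabilities in (i) with the three norms in (ii). Conversely, any nonzero positive $F$ is a positive multiple of $\phi_\Ee$ for $\lambda_i=\Tr F_i/\sum_j\Tr F_j$ and $\rho_i=F_i/\Tr F_i$, so (ii) follows from (i) by homogeneity. For (ii)$\Leftrightarrow$(iii), given hermitian $F$, I choose $\sigma^*\in\states(\Ha)$ attaining the infimum in Lemma \ref{lemma:qccq}, so $-\lambda\sigma^*\le F_i\le\lambda\sigma^*$ with $\lambda=\|\phi^{cq}_F\|^\diamond$. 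The shifted collection $F'_i=F_i+\lambda\sigma^*$ is positive and satisfies $\|\phi^{cq}_{F'}\|^\diamond\le 2\lambda$; because $\Phi$ is a channel and $M$ is a POVM, $\sum_i\Tr[M_i\Phi(\sigma^*)]=1$, so applying (ii) to $F'$ and cancelling the common $\lambda$ contributions on both sides gives (iii), with the factor $2$ explained by the bound $\|\phi^{cq}_{F'}\|^\diamond\le 2\lambda$. The converse runs the same argument in reverse with $F'_i=F_i-\tfrac12\lambda\sigma^*$.

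The key equivalence is (ii)$\Leftrightarrow$(iv). Note first that $\phi^{qc}_{\Phi^*(M)}$ and $\phi^{qc}_{\Psi^*(N)}$ are channels, so Proposition \ref{prop:properties}(ii) gives
\[
\tfrac12\|\phi^{qc}_{\Phi^*(M)}-\phi^{qc}_{\Psi^*(N)}\|_\diamond=\sup_{\gamma\ge 0,\,\|\gamma\|^\diamond\le 1}\<\gamma,\phi^{qc}_{\Phi^*(M)}-\phi^{qc}_{\Psi^*(N)}\>.
\]
Since the image of any qc-map is diagonal, $\phi^{qc}_A=\delta_{\mathbb C^k}\circ\phi^{qc}_A$; combined with $\<\gamma,\delta_{\mathbb C^k}\circ\xi\>=\<\gamma\circ\delta_{\mathbb C^k},\xi\>$ and $\|\gamma\circ\delta_{\mathbb C^k}\|^\diamond\le\|\gamma\|^\diamond$ (Proposition \ref{prop:properties}(iii)), this restricts the supremum to $\gamma=\phi^{cq}_G$ with positive $G$ and $\|\phi^{cq}_G\|^\diamond\le 1$. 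A short calculation in the spirit of Lemma \ref{lemma:s} yields $\<\phi^{cq}_G,\phi^{qc}_A\>=\sum_i\Tr[G_iA_i]$, so the bound $\|\phi^{qc}_{\Phi^*(M)}-\phi^{qc}_{\Psi^*(N)}\|_\diamond\le 2\epsilon$ in (iv) is equivalent to $\sum_i\Tr[M_i\Phi(G_i)]\le\sum_i\Tr[N_i\Psi(G_i)]+\epsilon\|\phi^{cq}_G\|^\diamond$ for every positive $G$. The direction (iv)$\Rightarrow$(ii) is then immediate by maximising over $M$.

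For (ii)$\Rightarrow$(iv), I fix $M$ and consider the bilinear form $f(N,F)=\sum_i\Tr[F_i(\Phi^*(M_i)-\Psi^*(N_i))]$ on the compact convex sets $\Me(\Ka',k)$ and $\{F\ge 0:\|\phi^{cq}_F\|^\diamond\le 1\}$; the latter set is compact because $\|\phi^{cq}_F\|^\diamond$ and $\max_i\|F_i\|$ are equivalent on the finite-dimensional space of hermitian $k$-tuples (via Lemma \ref{lemma:qccq}). Condition (ii) translates into $\sup_F\min_N f(N,F)\le\epsilon$, and Sion's minimax theorem swaps the order to give $\min_N\sup_F f(N,F)\le\epsilon$, producing a single $N$ that works uniformly in $F$ as required by (iv). The main obstacle is this minimax swap, which is precisely what upgrades the pointwise statement (ii) to the uniform quantifier in (iv); the remaining work, including the bookkeeping of the factor $2$ coming from Proposition \ref{prop:properties}(ii) and the reduction from general positive $\gamma$ to $\phi^{cq}_G$, is routine.
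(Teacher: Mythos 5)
Your proof is correct and uses essentially the same ingredients as the paper's: the renormalisation via Lemma \ref{lemma:dual_guess} for (i)$\Leftrightarrow$(ii), the shift by $t\sigma$ with the factor $2$ coming from $\|\phi^{cq}_{F+t\sigma}\|^\diamond\le 2\|\phi^{cq}_F\|^\diamond$, the reduction of the supremum in Proposition \ref{prop:properties}(ii) to positive cq-maps via the idempotent $\delta_{\mathbb C^k}$, and the minimax theorem to upgrade the pointwise bound to a single uniform $N$. The only difference is organisational (hub-and-spoke around (ii) rather than the paper's cycle (i)$\Rightarrow$(ii)$\Rightarrow$(iii)$\Rightarrow$(iv)$\Rightarrow$(i)), and the two extra direct implications this requires — the reverse shift $F'_i=F_i-\tfrac12\lambda\sigma^*$ for (iii)$\Rightarrow$(ii) and the minimax run from (ii) rather than (iii) — both go through.
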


\begin{proof}
Suppose (i) and let $F_i\in  B(\Ha)^+$, $i=1,\dots,k$. Put  $c_i=\Tr F_i$, $c=\sum_ic_i$ and let $\Fe=\{\lambda_i,\sigma_i\}_{i=1}^k$, where  
$\sigma_i=(c_i)^{-1}F_i$ and  $\lambda_i=c_i/c$.
Then $\Fe$ is an ensemble on $\Ha$ and $\phi^{cq}_F=c\phi_\Fe$. By Lemma \ref{lemma:dual_guess}, we have 
\[
\|\Phi\circ \phi^{cq}_F\|^\diamond=c\|\Phi\circ \phi_\Fe\|^\diamond=c\|\phi_{\Phi(\Fe)}\|^\diamond=cP_{succ}(\Phi(\Fe))
\] 
and similarly for $\Psi$, this proves (ii). 

Suppose (ii) and let $F_i \in B_h(\Ha)$, $i=1,\dots,k$.  Let $\|\phi^{cq}_F\|^\diamond=t$, then by (\ref{eq:updiamondinf}) there is some $\sigma\in \states(\Ha)$ such that $2t\phi_\sigma\ge \phi^{cq}_F+t\phi_\sigma\ge 0$, so that $F_i+t\sigma\in B(\Ha)^+$ for all $i$. By (ii), we have for any $M\in \Me(\Ha,k)$
\begin{align*}
\sum_i \Tr M_i\Phi(F_i)+t&=\sum_i \Tr M_i\Phi(F_i+t\sigma)=\<\phi^{qc}_M,\Phi\circ\phi^{cq}_{F+t\sigma}\>\\
&\le 
\|\Phi\circ(\phi^{cq}_{F+t\sigma})\|^\diamond\le \|\Psi\circ(\phi^{cq}_{F+t\sigma})\|^\diamond+\epsilon \|\phi^{cq}_{F+t\sigma}\|^\diamond\\
&=\sup_{N\in \Me(\Ka',k)}\sum_i\Tr N_i\Psi(F_i) +t+\epsilon \|\phi^{cq}_{F+t\sigma}\|^\diamond
\end{align*}
here $F+t\sigma=\{F_1+t\sigma,\dots,F_k+t\sigma\}$.   It is now enough to notice that $0\le \phi^{cq}_{F+t\sigma}\le 2t\phi_\sigma$, so that $\|\phi^{cq}_{F+t\sigma}\|^\diamond\le 2t=2\|\phi^{cq}_F\|^\diamond$.

Next, suppose (iii) and let $M\in \Me(\Ha,k)$. Then  
\begin{align*}
2\epsilon&\ge \max_{\|\phi^{cq}_F\|^\diamond\le 1}\min_{N\in \Me(\Ka',k)} \left(\sum_i\Tr\Phi(F_i)M_i-\sum_i\Tr \Psi(F_i)N_i\right)\\
&=\max_{\|\phi^{cq}_F\|^\diamond\le 1}\min_{N\in \Me(\Ka',k)}\<\phi^{qc}_{\Phi^*(M)}-\phi^{qc}_{\Psi^*(N)},\phi^{cq}_F\>\\
&=\max_{\|\psi\|^\diamond\le 1}\min_{N\in \Me(\Ka',k)}\<\phi^{qc}_{\Phi^*(M)}-\phi^{qc}_{\Psi^*(N)},\psi\>.
\end{align*}
Since $\Me(\Ka',k)$ and the unit ball of $\|\cdot\|^\diamond$ are compact convex sets and the map $(\psi,N)\mapsto \<\phi^{qc}_{\Phi^*(M)}-\phi^{qc}_{\Psi^*(N)},\psi\>$ is linear in both arguments, the minimax theorem (see e.g. \cite{strasser1985statistics}) applies and we have
\begin{align*}
\min_{N\in \Me(\Ka',k)} \|\phi^{qc}_{\Phi^*(M)}-\phi^{qc}_{\Psi^*(N)}\|_\diamond&=\min_{N\in \Me(\Ka',k)}\max_{\|\psi\|^\diamond\le1}\<\phi^{qc}_{\Phi^*(M)}-\phi^{qc}_{\Psi^*(N)},\psi\>\\
&=\max_{\|\psi\|^\diamond\le 1}\min_{N\in \Me(\Ka',k)}\<\phi^{qc}_{\Phi^*(M)}-\phi^{qc}_{\Psi^*(N)},\psi\>\\
&\le 2\epsilon.
\end{align*}

Finally, suppose (iv) and let $\Ee=\{\lambda_i,\sigma_i\}_{i=1}^k$ be an ensemble on $\Ha$. Let $M\in \Me(\Ka',k)$ and let $N\in \Me(\Ka',k)$ be as in (iv). 
Then 
\begin{align*}
\sum_i \Tr M_i \lambda_i\Phi(\sigma_i)&=\<\phi^{qc}_{\Phi^*(M)},\phi_\Ee\>\le 
\<\phi^{qc}_{\Psi^*(N)},\phi_\Ee\>+\frac12\|\phi^{qc}_{\Phi^*(M)}-\phi^{qc}_{\Psi^*(N)}\|_\diamond\|\phi_\Ee\|^\diamond\\
&\le \sum_i\Tr N_i \lambda_i \Psi(\sigma_i)+\epsilon \|\phi_\Ee\|^\diamond.
\end{align*}
By Lemma \ref{lemma:dual_guess}, we obtain (i).

\end{proof}

\begin{rem} Assume that  $\Phi$ in the above proposition is a qc-channel. In this case, $\delta_\Ka\circ \Phi=\Phi$ and by putting 
$M=\{|e_1^\Ka\>\<e_1^\Ka|,\dots,|e_{d_\Ka}^\Ka\>\<e_{d_\Ka}^\Ka|   \}$ in (iv), we obtain that (i) is equivalent to $\delta(\Phi,\Psi)\le 2\epsilon$. This also gives a randomization criterion for classical-to-classical channels. Compare this to \cite{buscemi2015degradable}, where an analogous statement for such channels is proved but with a dimension-dependent factor multiplying $\epsilon$.

\end{rem}

We now prove our main result.

\begin{thm}\label{thm:rand_all} Let $\Phi\in \Ce(\Ha,\Ka)$, $\Psi\in \Ce(\Ha,\Ka')$, $\epsilon\ge 0$. The following statements are equivalent.
\begin{enumerate}
\item[(i)] $\delta(\Phi,\Psi)\le \epsilon$.

\item[(ii)] For any finite dimensional Hilbert space $\Ka_0$ and any map $\gamma\in \Le(\Ka_0,\Ha)^+$, 
\[
\|\Phi\circ\gamma\|^\diamond\le \|\Psi\circ\gamma\|^\diamond+\frac{\epsilon}2\|\gamma\|^\diamond.
\]
\item[(iii)] For any finite dimensional Hilbert space $\Ka_0$ and any ensemble $\Ee=\{\lambda_i,\sigma_i\}_{i=1}^k$ on $\Ha\otimes \Ka_0$,  
\[
P_{succ}((\Phi\otimes id_{\Ka_0})(\Ee))\le P_{succ}((\Psi\otimes id_{\Ka_0})(\Ee))+\frac{\epsilon}2P_{succ}(\Ee).
\]
\end{enumerate}
Moreover, in (ii) and (iii), one can restrict to $\Ka_0=\Ka$ and equiprobable ensembles with $k=d_\Ka^2$ elements.

\end{thm}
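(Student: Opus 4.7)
The plan is to establish the cycle (i)$\Rightarrow$(iii)$\Rightarrow$(ii)$\Rightarrow$(i). The key tools will be Proposition~\ref{prop:properties}---especially the doubled-duality identity in (ii) for differences of channels---Proposition~\ref{prop:dual_guess_all} which represents positive maps as equiprobable ensembles on a tensor product, and, for the final step, the minimax theorem, in the same spirit as the proof of Proposition~\ref{prop:classical}.

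For (i)$\Rightarrow$(iii), I would write $\Phi=\alpha\circ\Psi+R$ with $\alpha\in\Ce(\Ka',\Ka)$ and $\|R\|_\diamond\le\epsilon$, then for any ensemble $\Ee$ on $\Ha\otimes\Ka_0$ and any POVM $M\in\Me(\Ka\otimes\Ka_0,k)$ split
\[
\sum_i\lambda_i\Tr[M_i(\Phi\otimes id)(\sigma_i)]=\sum_i\lambda_i\Tr[(\alpha^*\otimes id)(M_i)(\Psi\otimes id)(\sigma_i)]+\sum_i\lambda_i\Tr[M_i(R\otimes id)(\sigma_i)].
\]
Since $(\alpha^*\otimes id)(M)$ is a POVM, the first summand is at most $P_{succ}((\Psi\otimes id)(\Ee))$. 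Using Lemma~\ref{lemma:s} and Remark~\ref{rem:cqqc}, the second rewrites as $\<\phi_\Ee\circ\phi^{qc}_M,R\otimes id\>$; since $\phi_\Ee\circ\phi^{qc}_M\ge 0$ and $R\otimes id$ is a difference of channels with $\|R\otimes id\|_\diamond\le\epsilon$, Proposition~\ref{prop:properties}(ii) bounds it by $\tfrac{\epsilon}{2}\|\phi_\Ee\circ\phi^{qc}_M\|^\diamond\le\tfrac{\epsilon}{2}\|\phi_\Ee\|^\diamond=\tfrac{\epsilon}{2}P_{succ}(\Ee)$, using Proposition~\ref{prop:properties}(iii) and Lemma~\ref{lemma:dual_guess} for the last two steps. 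Taking the maximum over $M$ yields (iii). The passage (iii)$\Rightarrow$(ii) is then a direct consequence of Proposition~\ref{prop:dual_guess_all}: for $\gamma\in\Le(\Ka_0,\Ha)^+$ with $\gamma\neq 0$, the associated equiprobable ensemble $\Ee_\gamma$ on $\Ha\otimes\Ka_0$ satisfies $\Ee_{\Phi\circ\gamma}=(\Phi\otimes id)(\Ee_\gamma)$ and $\|\Phi\circ\gamma\|^\diamond=d_{\Ka_0}\Tr[\gamma(I)]P_{succ}((\Phi\otimes id)(\Ee_\gamma))$ (using trace preservation of $\Phi$), and analogously for $\Psi$ and for $\gamma$ itself; dividing (iii) applied to $\Ee_\gamma$ by $d_{\Ka_0}\Tr[\gamma(I)]$ gives (ii).

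The real obstacle will be (ii)$\Rightarrow$(i). I would start from
\[
\delta(\Phi,\Psi)=\inf_{\alpha\in\Ce(\Ka',\Ka)}2\sup_{\gamma\ge 0,\,\|\gamma\|^\diamond\le 1}\<\gamma,\Phi-\alpha\circ\Psi\>
\]
using Proposition~\ref{prop:properties}(ii). The pairing is bilinear, and both $\Ce(\Ka',\Ka)$ and the positive part of the $\|\cdot\|^\diamond$-unit ball in $\Le(\Ka,\Ha)$ are convex and compact in finite dimension, so the minimax theorem permits an exchange of $\inf$ and $\sup$. By Lemma~\ref{lemma:s}, $\<\gamma,\alpha\circ\Psi\>=\<\alpha,\Psi\circ\gamma\>$, and since $\Psi\circ\gamma\ge 0$, Proposition~\ref{prop:properties}(i) identifies $\sup_{\alpha\in\Ce(\Ka',\Ka)}\<\alpha,\Psi\circ\gamma\>$ with $\|\Psi\circ\gamma\|^\diamond$. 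The same identity with $\alpha=id_\Ka\in\Ce(\Ka,\Ka)$ applied to $\Phi\circ\gamma$ gives $\<\gamma,\Phi\>\le\|\Phi\circ\gamma\|^\diamond$. Combining this with (ii), one obtains $\<\gamma,\Phi\>-\|\Psi\circ\gamma\|^\diamond\le\tfrac{\epsilon}{2}\|\gamma\|^\diamond\le\tfrac{\epsilon}{2}$ on the unit ball, whence $\delta(\Phi,\Psi)\le\epsilon$.

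Finally, the above (ii)$\Rightarrow$(i) argument invokes only $\gamma\in\Le(\Ka,\Ha)^+$, i.e.\ $\Ka_0=\Ka$; by Proposition~\ref{prop:dual_guess_all} every such $\gamma$ is, up to a positive scalar, the cq-map of an equiprobable ensemble on $\Ha\otimes\Ka$ with $d_\Ka^2$ elements. The restricted versions of (ii) and (iii) therefore already suffice to close the cycle, which yields the moreover clause.
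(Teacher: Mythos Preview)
Your proof is correct and follows essentially the same approach as the paper. The only organizational difference is that the paper first establishes (i)$\Leftrightarrow$(ii) and then derives (iii) by noting $\delta(\Phi\otimes id_{\Ka_0},\Psi\otimes id_{\Ka_0})\le\epsilon$ and applying the already-proved (i)$\Rightarrow$(ii) with $\gamma=\phi_\Ee$, whereas you prove (i)$\Rightarrow$(iii) directly by the POVM splitting argument; the minimax step (ii)$\Rightarrow$(i), the use of Proposition~\ref{prop:dual_guess_all} for (iii)$\Rightarrow$(ii), and the handling of the moreover clause are identical in substance.
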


\begin{proof} Assume (i) and let $\alpha\in \Ce(\Ka',\Ka)$ be a channel such that
\[
\|\Phi-\alpha\circ\Psi\|_\diamond\le \epsilon.
\]
 Then for any $\gamma\in \Le(\Ka_0,\Ha)^+$ and $\chi\in \Ce(\Ka,\Ka_0)$, we have by positivity and Proposition \ref{prop:properties} (ii), (iii) that
\begin{align*}
\<\gamma,\chi\circ\Phi\>&\le\<\gamma,\chi\circ\alpha\circ\Psi\>+|\<\gamma,\chi\circ(\Phi-\alpha\circ\Psi)\>|\\
&\le \<\gamma,\chi\circ\alpha\circ\Psi\>+\frac12\|\gamma\|^\diamond\|\chi\circ(\alpha\circ\Psi-\Phi)\|_\diamond\\
&\le \<\gamma,\chi\circ\alpha\circ\Psi\>+\frac12\epsilon\|\gamma\|^\diamond
\end{align*}
By Proposition \ref{prop:properties} (i) and properties of $s$, we have
\begin{align*}
\|\Phi\circ\gamma\|^\diamond&=\sup_{\chi\in \Ce(\Ka,\Ka_0)}\<\chi,\Phi\circ\gamma\>=\sup_{\chi\in \Ce(\Ka,\Ka_0)}\<\gamma,\chi\circ\Phi\>\\
&\le \sup_{\chi\in \Ce(\Ka,\Ka_0)}\<\gamma,\chi\circ\alpha\circ\Psi\>+\frac12\epsilon\|\gamma\|^\diamond\\&=\sup_{\chi\in \Ce(\Ka,\Ka_0)}\<\chi\circ\alpha,\Psi\circ\gamma\>+\frac12\epsilon\|\gamma\|^\diamond\\
&\le \sup_{\xi\in \Ce(\Ka',\Ka_0)}\<\xi,\Psi\circ\gamma\>+\frac12\epsilon\|\gamma\|^\diamond=\|\Psi\circ\gamma\|^\diamond +\frac12\epsilon\|\gamma\|^\diamond.
\end{align*}
Hence (i) implies (ii). We will now  prove the converse.
So suppose (ii), with $\Ka_0=\Ka$. By Proposition \ref{prop:properties} (ii), we have
\[
\min_{\alpha\in \Ce(\Ka',\Ka)}\|\Phi-\alpha\circ\Psi\|_\diamond=
2\min_{\alpha\in \Ce(\Ka',\Ka)}\left\{\max_{\substack{\gamma\in \Le(\Ka,\Ha)^+,\\ \|\gamma\|^\diamond\le 1}} \<\gamma,\Phi-\alpha\circ\Psi\>\right\}
\]  
Similarly as in the proof of Proposition \ref{prop:classical}, we may apply the minimax theorem. We obtain 
\begin{align*}
\min_{\alpha\in \Ce(\Ka',\Ka)}\|\Phi-\alpha\circ\Psi\|_\diamond&=2\max_\gamma\min_\alpha\<\gamma,\Phi-\alpha\circ\Psi\>\\
&=2\max_\gamma\left\{ \<\gamma,\Phi\>-\|\Psi\circ\gamma\|^\diamond\right\}\\
&\le 2\max_\gamma \left\{\|\Phi\circ\gamma\|^\diamond-\|\Psi\circ\gamma\|^\diamond\right\}\le \epsilon.
\end{align*}
Hence (i) and (ii) are equivalent, moreover, it is enough to assume $\Ka_0=\Ka$ in (ii). 

Next, suppose (i). Since by (\ref{eq:diamond}), we have $\|\phi\|_\diamond=\|\phi\otimes id_{\Ka_0}\|_\diamond$ for any $\Ka_0$, we have $\delta(\Phi\otimes id_{\Ka_0},\Psi\otimes id_{\Ka_0})\le \epsilon$. As we just proved above, it follows that for any
$\Ka_1$ and any map $\gamma\in \Le(\Ka_1,\Ha\otimes\Ka_0)^+$, we have 
\[
\|(\Phi\otimes id_{\Ka_0})\circ\gamma\|^\diamond\le \|(\Psi\otimes id_{\Ka_0})\circ\gamma\|^\diamond+\frac{\epsilon}2\|\gamma\|^\diamond.
\]
If $\Ee$ is any ensemble on $\Ha\otimes\Ka_0$, then by putting $\gamma=\phi_\Ee$ and using Lemma \ref{lemma:dual_guess} we obtain the inequality in (iii). Now it is enough to use Proposition \ref{prop:dual_guess_all} to obtain the implication (iii) $\implies$ (ii).

\end{proof}

The following two results were already obtained in \cite{buscemi2012comparison}.
In particular, Corollary \ref{coro:rand_0} shows that for $\epsilon=0$ one can restrict to ensembles of separable states. We will give the proofs in our setting.

\begin{thm}\label{thm:rand_0} Let $\Phi\in \Ce(\Ha,\Ka)$, $\Psi\in \Ce(\Ha,\Ka')$ and let $\xi\in \Ce(\Ka_0,\Ka)$ be a surjective channel.
 Then $\Phi\preceq \Psi$ if and only if for any ensemble $\Ee$ on $\Ha\otimes \Ka_0$,
\[
P_{succ}((\Phi\otimes\xi)(\Ee))\le P_{succ}((\Psi\otimes\xi)(\Ee)).
\]

\end{thm}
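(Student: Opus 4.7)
The easy direction $(\Rightarrow)$ is immediate from the data processing property of $P_{succ}$. If $\Phi=\alpha\circ\Psi$ for some $\alpha\in\Ce(\Ka',\Ka)$, then $\Phi\otimes\xi=(\alpha\otimes id_\Ka)\circ(\Psi\otimes\xi)$, so $(\Phi\otimes\xi)(\Ee)$ is the image of $(\Psi\otimes\xi)(\Ee)$ under the channel $\alpha\otimes id_\Ka$. Writing the success probability as $\|\phi_\Fe\|^\diamond$ via Lemma \ref{lemma:dual_guess} and applying contractivity of $\|\cdot\|^\diamond$ under post-composition (Proposition \ref{prop:properties}(iii)) gives the desired inequality.

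For the converse direction, I would reduce to Theorem \ref{thm:rand_all} at $\epsilon=0$. By that theorem (equivalence of (i) with (ii), combined with the ``moreover'' clause), it suffices to prove that $\|\Phi\circ\gamma\|^\diamond\le\|\Psi\circ\gamma\|^\diamond$ for every $\gamma\in\Le(\Ka,\Ha)^+$. Via Proposition \ref{prop:dual_guess_all} this is equivalent to
\[
P_{succ}((\Phi\otimes id_\Ka)(\Ee_\gamma))\le P_{succ}((\Psi\otimes id_\Ka)(\Ee_\gamma))
\]
for the specific equiprobable ensembles $\Ee_\gamma$ on $\Ha\otimes\Ka$ constructed from $\gamma$ there. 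The bridge to the hypothesis is the surjective channel $\xi$: I would dualise the hypothesis and rewrite it as a comparison of success probabilities for $(\Phi\otimes id_{\Ka_0})$- and $(\Psi\otimes id_{\Ka_0})$-processed ensembles on $\Ha\otimes\Ka_0$ against POVMs of the form $(id_\Ka\otimes\xi^*)(M)$ on $\Ka\otimes\Ka_0$ (these are POVMs because $\xi^*$ is unital and CP). Since $\xi$ is surjective, $\xi^*$ is injective, so this pulled-back class of POVMs is rich; combined with the equivariance $\Ee_{\phi\circ\gamma}=(\phi\otimes id)(\Ee_\gamma)$ and a minimax step paralleling the proof of Theorem \ref{thm:rand_all}, this should yield the required inequality.

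The main obstacle is that surjectivity of the linear map $\xi\colon B(\Ka_0)\to B(\Ka)$ does not entail surjectivity on the cone of positive operators, let alone on states: a depolarising channel with parameter $p\in(0,1)$ is a surjective CPTP map whose image covers only a strict convex subset of $\states(\Ka)$. Thus the individual states of $\Ee_\gamma$ generally cannot be realised as $(id_\Ha\otimes\xi)(\sigma)$ for a state $\sigma$ on $\Ha\otimes\Ka_0$, and a naive state-by-state lift is unavailable. The delicate step is therefore to argue entirely on the dual POVM side, using only that $\xi^*$ is unital, CP, and injective, or alternatively to decompose preimages in $B(\Ha\otimes\Ka_0)$ into positive and negative parts and reassemble the inequality from both contributions.
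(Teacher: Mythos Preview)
Your forward direction is correct and matches the paper.

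For the converse, you have correctly identified the central difficulty: surjectivity of $\xi$ holds only on $B_h(\Ka)$, not on $\states(\Ka)$, so one cannot lift the states of an ensemble on $\Ha\otimes\Ka$ to states on $\Ha\otimes\Ka_0$. However, you leave the resolution of this obstacle as a sketch (``work on the dual POVM side'' or ``decompose preimages into positive and negative parts''), and neither sketch is carried to a proof. The injectivity of $\xi^*$ alone does not obviously suffice: the hypothesis maximises over \emph{all} POVMs on $\Ka'\otimes\Ka_0$, not only those of the form $(id\otimes\xi^*)(M)$, so a direct POVM-pullback does not compare the right quantities, and no minimax manoeuvre is indicated that would close the gap.

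The paper resolves the obstacle cleanly by invoking Proposition~\ref{prop:classical}. The hypothesis is exactly condition (i) of that proposition for the pair $(\Phi\otimes\xi,\Psi\otimes\xi)$ with $\epsilon=0$, and the proposition upgrades this to condition (iii), which is stated for arbitrary \emph{Hermitian} collections $F=\{F_1,\dots,F_k\}$ rather than positive ones. Now linear surjectivity of $id_\Ha\otimes\xi$ is exactly what is needed: given an ensemble $\Ee=\{\lambda_i,\sigma_i\}$ on $\Ha\otimes\Ka$, choose Hermitian $F_i\in B_h(\Ha\otimes\Ka_0)$ with $(id\otimes\xi)(F_i)=\lambda_i\sigma_i$ and apply (iii). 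Since $(\Phi\otimes\xi)(F_i)=(\Phi\otimes id)(\lambda_i\sigma_i)$ and similarly for $\Psi$, condition (iii) yields directly $P_{succ}((\Phi\otimes id)(\Ee))\le P_{succ}((\Psi\otimes id)(\Ee))$, and Theorem~\ref{thm:rand_all} finishes. Your ``positive/negative decomposition'' intuition is in fact already packaged in the implication (i)$\Rightarrow$(iii) of Proposition~\ref{prop:classical}; you should invoke that result rather than attempt to reproduce it inside this proof.
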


\begin{proof} The 'if' part of the theorem follows by the monotonicity property of the optimal success probabilities. For the converse, let $\Ee=\{\lambda_i,\sigma_i\}_{i=1}^k$ be any ensemble on $\Ha\otimes \Ka$. We will show that 
\[
P_{succ}((\Phi\otimes id)(\Ee))\le P_{succ}((\Psi\otimes id)(\Ee))
\]
which implies the statement by Theorem \ref{thm:rand_all}. Since $\xi$ is surjective, and hence $id\otimes \xi$ must be surjective as well, there are some 
$F_1,\dots,F_k$ in $B(\Ha\otimes \Ka_0)$ such that $(id\otimes \xi)(F_i)=\lambda_i\sigma_i$. Obviously, we may assume that $F_i=F_i^*$. Then by Proposition \ref{prop:classical} (iii), 
\begin{align*}
P_{succ}((\Phi\otimes id)(\Ee))&=\sup_{M\in \Me(\Ka\otimes \Ka,k)}\sum_i \Tr M_i(\Phi\otimes id)(\lambda_i\sigma_i)\\
&=\sup_{M\in \Me(\Ka\otimes \Ka,k)}\sum_i \Tr M_i(\Phi\otimes \xi)(F_i)\\
 &\le 
\sup_{N\in \Me(\Ka'\otimes \Ka,k)}\sum_i \Tr N_i(\Psi\otimes \xi)(F_i)\\
&=
\sup_{N\in \Me(\Ka'\otimes \Ka,k)}\sum_i\Tr N_i(\Psi\otimes id)(\lambda_i\sigma_i)\\
&=P_{succ}((\Psi\otimes id)(\Ee))
\end{align*}

\end{proof}

\begin{coro}\label{coro:rand_0} Let $\Phi\in \Ce(\Ha,\Ka)$, $\Psi\in \Ce(\Ha,\Ka')$. Let $\Se=\{\sigma_j\}\subset \states(\Ka)$  be a finite subset that spans  $B_h(\Ka)$. Then $\Phi\preceq \Psi$  if and only if for any ensemble $\Ee=\{\lambda_i,\rho_i\}_{i=1}^m$ on $\Ha\otimes \Ka$ of states of the form  $\rho_i=\sum_j \rho^i_j\otimes \sigma_j$ with $\rho^i_j\in B(\Ha)^+$, $\sum_j\Tr \rho^i_j=1$,  we have
\[
P_{succ}((\Phi\otimes id)(\Ee))\le P_{succ}((\Psi\otimes id)(\Ee)).
\]
Again, it is enough to assume $m=d_\Ka^2$ and $\lambda_i=\lambda_j$ for all $i,j$.

\end{coro}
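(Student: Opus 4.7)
The forward implication is immediate: if $\Phi=\alpha\circ\Psi$ for some $\alpha\in\Ce(\Ka',\Ka)$, then $(\Phi\otimes id_\Ka)(\Ee)=(\alpha\otimes id_\Ka)\bigl((\Psi\otimes id_\Ka)(\Ee)\bigr)$, so Lemma~\ref{lemma:dual_guess} combined with Proposition~\ref{prop:properties}(iii) yields the desired inequality for every ensemble $\Ee$.

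For the converse, the plan is to reduce to Theorem~\ref{thm:rand_0} via a carefully chosen surjective channel. Set $n=|\Se|$, index $\Se=\{\sigma_1,\dots,\sigma_n\}$ by the standard basis of $\mathbb C^n$, and introduce the cq-channel $\xi:=\phi^{cq}_\Se\in\Ce(\mathbb C^n,\Ka)$. Since $\Se$ spans $B_h(\Ka)$, the linear map $\xi:B_h(\mathbb C^n)\to B_h(\Ka)$ is surjective, so $\xi$ is eligible as the auxiliary channel in Theorem~\ref{thm:rand_0}.

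The key observation is that restricted-form ensembles on $\Ha\otimes\Ka$ arise exactly as images of ensembles on $\Ha\otimes\mathbb C^n$ along $id_\Ha\otimes\xi$. Indeed, any $\tau\in\states(\Ha\otimes\mathbb C^n)$ has a block expansion $\tau=\sum_{j,k}\tau_{jk}\otimes|e_j\>\<e_k|$, and the cq-structure of $\xi$ sends $|e_j\>\<e_k|$ to $\delta_{jk}\sigma_j$, so $(id_\Ha\otimes\xi)(\tau)=\sum_j\tau_{jj}\otimes\sigma_j$ is of restricted form (with $\tau_{jj}\ge 0$ and $\sum_j\Tr\tau_{jj}=1$); conversely, every $\rho=\sum_j\rho_j\otimes\sigma_j$ is the $(id\otimes\xi)$-image of the diagonal state $\sum_j\rho_j\otimes|e_j\>\<e_j|$. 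Since $(\Phi\otimes\xi)=(\Phi\otimes id_\Ka)\circ(id_\Ha\otimes\xi)$, the corollary's hypothesis rewrites equivalently as: for every ensemble $\Ee'$ on $\Ha\otimes\mathbb C^n$,
\[
P_{succ}\bigl((\Phi\otimes\xi)(\Ee')\bigr)\le P_{succ}\bigl((\Psi\otimes\xi)(\Ee')\bigr),
\]
and Theorem~\ref{thm:rand_0} applied with this surjective $\xi$ then yields $\Phi\preceq\Psi$.

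The further reduction to equiprobable ensembles with $m=d_\Ka^2$ states should follow by retracing the enough-part of Theorem~\ref{thm:rand_all} through the above correspondence: each $\gamma\in\Le(\Ka,\Ha)^+$ arising in that reduction produces, via Proposition~\ref{prop:dual_guess_all}, an equiprobable $d_\Ka^2$-ensemble $\Ee_\gamma$ on $\Ha\otimes\Ka$, and the remaining task is to replace it by a restricted-form ensemble of the same size and weights while preserving $\|\gamma\|^\diamond$, $\|\Phi\circ\gamma\|^\diamond$ and $\|\Psi\circ\gamma\|^\diamond$. I expect this preservation to be the principal technical obstacle, to be handled by replacing the full Heisenberg--Weyl family by a twirl adapted to the block structure induced by $\xi$, still yielding the defining identity $\frac{1}{d_\Ka}\sum_j U_j^* A U_j=\Tr[A]I$ on the relevant subspace.
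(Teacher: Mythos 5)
Your proof is correct and takes essentially the same route as the paper's: set $\xi=\phi^{cq}_{\Se}\in\Ce(\mathbb C^n,\Ka)$, note that it is surjective and that the restricted-form ensembles are precisely the $(id_\Ha\otimes\xi)$-images of ensembles on $\Ha\otimes\mathbb C^n$, and then invoke Theorem~\ref{thm:rand_0}. The paper likewise gives no separate argument for the final ``moreover'' clause about $m=d_\Ka^2$ equiprobable ensembles, so your flagging of that point as the remaining technicality does not indicate a gap relative to the published proof.
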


\begin{proof} We use Theorem \ref{thm:rand_0}. It is clear that $\xi=\phi^{cq}_{\Se}$ is a surjective channel and that for any ensemble $\Ee$ on $\Ha\otimes \Ka_0$, $(id\otimes \xi)(\Ee)$ has the above form.

\end{proof}

\section{The randomization criterion for quantum experiments}

A quantum statistical experiment, or just an experiment,  is a pair $\Te=(\Ha,\{\rho_\theta,\ \theta\in \Theta\})$, where  $\rho_\theta\in \states(\Ha)$ for all $\theta\in \Theta$ and $\Theta$ is an arbitrary set of parameters. Any experiment can be 
viewed as the set of possible states of some physical system, determined by some prior information on the true state. Note that this definition contains also classical statistical experiments on finite sample spaces, which can be identified with diagonal density matrices.

 Based on the outcome of a measurement on the system, a decision $j$  is chosen from a (finite) set $D$ of decisions. This procedure, or a decision rule, is represented by a POVM on $\Ha$ with outcomes in $D$. The performance of a decision rule is assessed by a {payoff function}, which in our case is a map $g: \Theta\times D\to \mathbb R^+$,  representing the {payoff} 
obtained if $j\in D$ is chosen while the true state is $\rho_\theta$.  The average payoff of the decision rule $M$ at $\theta\in \Theta$ is computed as 
\[
P_\Te(\theta,M,g)=\sum_{j\in D} g(\theta,j)\Tr\rho_\theta M_j.
\]
We call the pair $(D,g)$ a {decision space}.  

The theory of classical statistical experiments and their comparison was introduced by Blackwell in \cite{blackwell1951comparison} and further developed by Le Cam \cite{lecam1964sufficiency} and many other authors, see e.g. \cite{lecam1986asymptotic, torgersen1991comparison, strasser1985statistics} for more information. Following the classical definition, we may introduce the notion of (classical) deficiency for quantum experiments.

\begin{defi}\label{def:defi_exps}
Let $\Se=(\Ka, \{\sigma_\theta,\ \theta\in \Theta\})$ and $\Te=(\Ha,\{\rho_\theta,\ \theta\in\Theta\})$  be quantum statistical  experiments and let $\epsilon \ge 0$. We say that $\Te$ is classically { $\epsilon$-deficient} relative
 to $\Se$, in notation $\Se\preceq_{cl,\epsilon}\Te$, if for any  decision space $(D,g)$  and any $M\in \Me(\Ka,|D|)$, there is some $N\in \Me( \Ha,|D|) $ such that
\begin{equation}\label{eq:defi}
\sup_{\theta\in \Theta} \left[P_\Se(\theta,M,g)- P_\Te(\theta,N,g)-\epsilon\max_dg(\theta,d)\right]\le 0.
\end{equation}

\end{defi}

\begin{rem}\label{rem:guessing} Note that Definition \ref{def:defi_exps} can be rewritten  in terms of guessing probabilities. 
Indeed, let $\Pe_\Theta$ be the set of probability measures on $\Theta$ with finite support. By \cite[Theorem 3]{lecam1964sufficiency}, $\Se\preceq_{cl,\epsilon}\Te$ if and only if for any  decision space $(D,g)$ and 
any $p\in \Pe_\Theta$, we have
\begin{equation}\label{eq:deficiencyprob}
\inf_{M} P_\Se(p,M,g)\le \inf_{N} P_\Te(p,N,g)+\epsilon \sum_\theta \max_dp(\theta)g(\theta,d)
\end{equation}
where $P_\Se(p,M,g)=\sum_\theta p(\theta)P_\Se(\theta,M,g)$. The last inequality can be easily rewritten as 
\begin{equation}\label{eq:deficiencyensemble}
P_{succ}(\{\lambda_d, \sum_\theta \mu^d_\theta\sigma_\theta\})\le P_{succ}(\{\lambda_d, \sum_\theta \mu^d_\theta\rho_\theta\})+\epsilon P_{succ}(\Ee)
\end{equation}
 for some classical ensemble $\Ee=\{\lambda_d, \mathrm{diag}(\mu^d_\theta, \theta\in \mathrm{supp}(p))\}$.
 Conversely, for any such ensemble $\Ee$ one can find some $p\in \Pe_\Theta$ and a decision space $(D,g)$ such that $(\ref{eq:deficiencyprob})$ is equivalent to (\ref{eq:deficiencyensemble}). 

\end{rem}

 The next theorem is the celebrated Le Cam's randomization criterion for classical experiments. Note that our setting contains only experiments on finite sample spaces, but the theorem holds in a much more general case. 

\begin{thm}\label{thm:rc}\cite{lecam1964sufficiency} Let $\Te=(\Ha,\{\rho_\theta,\ \theta\in\Theta\})$ and $\Se=(\Ka, \{\sigma_\theta,\ \theta\in \Theta\})$ be classical statistical experiments. Then $\Se\preceq_{cl,\epsilon}\Te$ if and only if there is some channel $\alpha$ such that 
\[
\sup_\theta\|\sigma_\theta-\alpha(\rho_\theta)\|_1\le 2\epsilon.
\]
\end{thm}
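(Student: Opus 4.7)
The plan is to reduce the theorem to Proposition \ref{prop:classical} applied to the cq-channels $\Phi_\Se:=\phi^{cq}_{\{\sigma_\theta\}_{\theta\in\Theta}}$ and $\Phi_\Te:=\phi^{cq}_{\{\rho_\theta\}_{\theta\in\Theta}}$ (from $\mathbb C^\Theta$ to $\Ka$ and $\Ha$), exploiting the diagonal structure of $\rho_\theta,\sigma_\theta$ to convert the POVM bound of Prop \ref{prop:classical}(iv) into the required trace-norm bound. Finite $\Theta$ will be handled first, and infinite $\Theta$ by a compactness argument.

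For the ``if'' direction, given $\alpha\in \Ce(\Ha,\Ka)$ with $\sup_\theta\|\sigma_\theta-\alpha(\rho_\theta)\|_1\le 2\epsilon$, and any decision space $(D,g)$ with $M\in \Me(\Ka,|D|)$, I take $N:=\alpha^*(M)\in \Me(\Ha,|D|)$. Writing the Jordan decomposition $\sigma_\theta-\alpha(\rho_\theta)=\tau_+-\tau_-$ (so $\Tr\tau_+=\Tr\tau_-\le \epsilon$), a direct calculation gives
\begin{align*}
P_\Se(\theta,M,g)-P_\Te(\theta,N,g)
&=\sum_d g(\theta,d)\Tr[(\sigma_\theta-\alpha(\rho_\theta))M_d]\\
&\le \max_d g(\theta,d)\,\Tr\tau_+\le \epsilon\max_d g(\theta,d),
\end{align*}
which is exactly \eqref{eq:defi}.

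The main direction is ``only if''. Assume $\Theta$ is finite. By Remark \ref{rem:guessing}, the relation $\Se\preceq_{cl,\epsilon}\Te$ translates to a comparison of success probabilities of classical ensembles on $\mathbb C^\Theta$ sent through $\Phi_\Se$ and $\Phi_\Te$; since cq-maps factor through $\delta_{\mathbb C^\Theta}$, this is precisely condition (i) of Proposition \ref{prop:classical} for every $k$. Applying the equivalence (i)$\iff$(iv) with $k=d_\Ka$ and $M=\{|e_j^\Ka\>\<e_j^\Ka|\}_j$ yields some $N\in\Me(\Ha,d_\Ka)$ with
\[
\|\phi^{qc}_{\Phi_\Se^*(M)}-\phi^{qc}_{\Phi_\Te^*(N)}\|_\diamond\le 2\epsilon.
\]
Set $\alpha:=\phi^{qc}_N\in\Ce(\Ha,\Ka)$. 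The operators $A_j:=\Phi_\Se^*(M_j)-\Phi_\Te^*(N_j)=\sum_\theta(\<e_j^\Ka,\sigma_\theta e_j^\Ka\>-\Tr[N_j\rho_\theta])|e_\theta\>\<e_\theta|$ are diagonal, so evaluating Lemma \ref{lemma:qccq} at the pure state $\sigma=|e_\theta\>\<e_\theta|$ yields
\[
\sum_j\big|\<e_j^\Ka,\sigma_\theta e_j^\Ka\>-\Tr[N_j\rho_\theta]\big|\le 2\epsilon,
\]
and the left-hand side equals $\|\sigma_\theta-\alpha(\rho_\theta)\|_1$ because both operators are diagonal in the basis of $\Ka$.

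For infinite $\Theta$, the previous paragraph produces, for each finite $\Theta_0\subset\Theta$, a channel $\alpha_{\Theta_0}$ satisfying the bound on $\Theta_0$; by compactness of $\Ce(\Ha,\Ka)$, a subnet along the directed set of finite subsets converges to a channel $\alpha$ valid for all $\theta\in\Theta$. I expect the main technical hurdle to be the careful identification at the start of the ``only if'' direction: matching Definition \ref{def:defi_exps} (via Remark \ref{rem:guessing}) with condition (i) of Proposition \ref{prop:classical}, verifying that only the diagonal parts of ensembles on $\mathbb C^\Theta$ matter and that the success-probability inequality survives this reduction without losing any constants.
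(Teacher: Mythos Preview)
The paper does not supply its own proof of this theorem: it is quoted as Le Cam's classical result with a citation to \cite{lecam1964sufficiency}, so there is no in-paper argument to compare your attempt against.

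That said, your derivation is correct and recovers the classical criterion from the paper's machinery. The ``if'' direction is standard. For ``only if'', your reduction via Remark~\ref{rem:guessing} to condition (i) of Proposition~\ref{prop:classical} for the cq-channels $\Phi_\Se,\Phi_\Te$ is valid (the point that cq-maps factor through $\delta_{\mathbb C^\Theta}$, so only diagonal ensembles matter and $P_{succ}$ only decreases under $\delta_{\mathbb C^\Theta}$, is exactly the right observation). The choice $M=\{|e_j^\Ka\>\<e_j^\Ka|\}$ in condition (iv), together with the diagonal evaluation from Lemma~\ref{lemma:qccq}, then yields the trace-norm bound for each $\theta$, and the passage from finite to infinite $\Theta$ by compactness of $\Ce(\Ha,\Ka)$ is routine. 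Two minor points: first, Remark~\ref{rem:guessing} itself invokes a (separate) minimax result from \cite{lecam1964sufficiency}, so your proof is not entirely internal to this paper---if you wish to avoid that, you can replace it by a direct minimax argument of the same type as in the proof of (iii)$\Rightarrow$(iv) in Proposition~\ref{prop:classical}. Second, the channel $\phi^{qc}_N$ formally maps $\Ha\to\mathbb C^{d_\Ka}$ rather than $\Ha\to\Ka$; you are tacitly identifying the two via the fixed basis of $\Ka$, which is fine here since the $\sigma_\theta$ are diagonal, but it is worth saying explicitly.
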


Classical deficiency of quantum experiments was studied in \cite{jencova2012comparison,matsumoto2010randomization}. It was proved in \cite{matsumoto2014anexample} that Theorem \ref{thm:rc} is not true for quantum experiments, even if we assume that $\epsilon =0$ and that $\Te$ is classical.  Matsumoto in \cite{matsumoto2010randomization} proved a randomization criterion in terms of quantum decision spaces, defined as  pairs $(\De,G)$, consisting of a Hilbert space $\De$ and a payoff map $G:\Theta \to B(\De)^+$.  In this case, decision rules  are represented by channels 
$\phi\in \Ce(\Ha,\De)$ and the payoff is computed as 
\[
P_\Te(\theta,\phi,G)=\Tr \phi(\rho_\theta)G(\theta).
\] 
Quantum $\epsilon$-deficiency, denoted by $\Se\preceq_\epsilon \Te$, is then defined analogically as in Definition \ref{def:defi_exps}. An operational interpretation of quantum decision spaces  is not clear.

The aim of the present section is to obtain a quantum version of the randomization criterion using Theorem \ref{thm:rand_all}. Our version is based on comparison of guessing probabilities for certain ensembles obtained from the two experiments and by Remark \ref{rem:guessing} 
can be seen as an extension of Le Cam's theorem. We also include the proof of Matsumoto's randomization criterion, which fits nicely into our framework.  
  
\begin{thm}\label{thm:rand_crit} Let $\Se=(\Ka, \{\sigma_\theta,\ \theta\in \Theta\})$ and $\Te=(\Ha,\{\rho_\theta,\ \theta\in\Theta\})$  be quantum statistical  experiments and let $\epsilon \ge 0$. Then the following are equivalent.
\begin{enumerate}
\item[(i)] $\Se\preceq_{\epsilon}\Te$
\item[(ii)] For any finite subset $\{\theta_1,\dots,\theta_n\}\subseteq \Theta$ and any ensemble $\Ee=\{\lambda_i,\tau_i\}_{i=1}^k$  on $\mathbb C^{n}\otimes \Ka$, consisting of block-diagonal states $\tau_i=\sum_{j=1}^n|e^{n}_j\>\<e^{n}_j|\otimes \tau_i^j$, $\tau_i^j\in B(\Ka)^+$, we have
\[
P_{succ}(\{\lambda_i,\sum_{j=1}^n\sigma_{\theta_j}\otimes \tau^j_{i}\})\le P_{succ}(\{\lambda_i,\sum_{j=1}^n\rho_{\theta_j}\otimes \tau^j_{i}\})+\epsilon P_{succ}(\Ee).
\]
\item[(iii)] There is some $\alpha\in \Ce(\Ha,\Ka)$ such that 
\[
\sup_{\theta\in \Theta} \|\sigma_\theta-\alpha(\rho_\theta)\|_1\le 2\epsilon.
\]

\end{enumerate}
Moreover, we may restrict to equiprobable ensembles with $k=d_\Ka^2$.

\end{thm}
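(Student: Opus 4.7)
The plan is to reduce the theorem to a randomization claim for cq-channels on finite parameter subsets and then invoke Theorem~\ref{thm:rand_all} together with a compactness argument. For each finite $\Theta_0 = \{\theta_1,\ldots,\theta_n\}\subseteq\Theta$, I would introduce the cq-channels $\Phi_n := \phi^{cq}_{\{\rho_{\theta_1},\ldots,\rho_{\theta_n}\}}\in\Ce(\mathbb C^n,\Ha)$ and $\Psi_n := \phi^{cq}_{\{\sigma_{\theta_1},\ldots,\sigma_{\theta_n}\}}\in\Ce(\mathbb C^n,\Ka)$. By Remark~\ref{rem:cqqc} and Lemma~\ref{lemma:qccq}, for any $\alpha\in\Ce(\Ha,\Ka)$ one has $\|\Psi_n - \alpha\circ\Phi_n\|_\diamond = \max_j \|\sigma_{\theta_j}-\alpha(\rho_{\theta_j})\|_1$, so condition (iii) restricted to $\Theta_0$ is exactly $\delta(\Psi_n,\Phi_n)\le 2\epsilon$.

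For (ii)$\Leftrightarrow$(iii) I would apply Theorem~\ref{thm:rand_all} to $\Psi_n$ and $\Phi_n$ with parameter $2\epsilon$ and ancilla $\Ka_0=\Ka$. The identity $\Phi_n=\Phi_n\circ\delta_{\mathbb C^n}$ (and similarly for $\Psi_n$) forces $(\Phi_n\otimes id_\Ka)(\Ee)$ and $(\Psi_n\otimes id_\Ka)(\Ee)$ to depend only on the block-diagonalization $\tilde\Ee$ obtained by applying $\delta_{\mathbb C^n}\otimes id$ state-wise, while $P_{succ}(\tilde\Ee)\le P_{succ}(\Ee)$ by the contraction Proposition~\ref{prop:properties}(iii). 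Hence testing the Theorem~\ref{thm:rand_all}(iii) inequality on block-diagonal ensembles suffices, and these are precisely those in (ii); the reduction to equiprobable $d_\Ka^2$-element ensembles then follows from Proposition~\ref{prop:dual_guess_all}. To pass from every finite $\Theta_0$ to the full $\Theta$, the closed nonempty sets $K_{\Theta_0} := \{\alpha\in\Ce(\Ha,\Ka) : \max_{\theta\in\Theta_0}\|\sigma_\theta-\alpha(\rho_\theta)\|_1\le 2\epsilon\}$ have the finite intersection property, so compactness of $\Ce(\Ha,\Ka)$ yields a universal $\alpha$ in their total intersection.

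The implication (iii)$\Rightarrow$(i) is a H\"older estimate with the trace-zero trick: setting $\psi:=\phi\circ\alpha$ for a quantum decision space $(\De,G)$ and $\phi\in\Ce(\Ka,\De)$, one writes $P_\Se(\theta,\phi,G)-P_\Te(\theta,\psi,G) = \Tr[(\phi^*(G(\theta))-cI)(\sigma_\theta-\alpha(\rho_\theta))]$ for arbitrary $c$, and the spectral-midpoint choice forces the operator norm of $\phi^*(G(\theta))-cI$ to be at most $\|G(\theta)\|/2$, yielding the bound $\epsilon\|G(\theta)\|$. For the converse (i)$\Rightarrow$(iii) I would use the minimax reformulation from the proof of Theorem~\ref{thm:rand_all},
\[
\tfrac12\delta(\Psi_n,\Phi_n) \;=\; \sup_{\gamma\ge 0,\ \|\gamma\|^\diamond\le 1}\bigl[\<\gamma,\Psi_n\>-\|\Phi_n\circ\gamma\|^\diamond\bigr].
\]
For any admissible $\gamma$ the operators $G_j:=\gamma^*(|e^{\mathbb C^n}_j\>\<e^{\mathbb C^n}_j|)\in B(\Ka)^+$ satisfy $\sum_j\|G_j\|\le 1$, since $\gamma\le\phi_\sigma$ for some state $\sigma$ on $\mathbb C^n$ forces $G_j\le\sigma_{jj}I$. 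Taking the decision space $\De=\Ka$, $G(\theta_j)=nG_j$, with decision rule $\phi=id_\Ka$, (i) produces $\psi\in\Ce(\Ha,\Ka)$ with $\Tr\sigma_{\theta_j}G_j-\Tr\psi(\rho_{\theta_j})G_j\le\epsilon\|G_j\|$ for each $j$; summing gives $\<\gamma,\Psi_n-\psi\circ\Phi_n\>\le\epsilon$ and hence $\<\gamma,\Psi_n\>-\|\Phi_n\circ\gamma\|^\diamond\le\epsilon$. Taking the supremum yields $\delta(\Psi_n,\Phi_n)\le 2\epsilon$, and the compactness step above delivers the universal $\alpha$ of (iii).

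The main obstacle I expect is the delicate choice of quantum decision space in (i)$\Rightarrow$(iii): the scaling $G(\theta_j)=nG_j$ is exactly what is needed so that the per-$\theta$ bound $\epsilon\|G(\theta_j)\|=\epsilon n\|G_j\|$ converts, after summing over $j$, into $\epsilon\sum_j\|G_j\|\le\epsilon$, avoiding an artificial factor of $n$. Bookkeeping of the factor-two discrepancy between the diamond-norm approximation $2\epsilon$ and the deficiency parameter $\epsilon$, and justifying the minimax swap by compactness and bilinearity, are the remaining careful steps.
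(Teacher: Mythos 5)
Your proof is correct, and its core is the same as the paper's: both reduce to the cq-channels $\Phi_n=\phi^{cq}_{\{\rho_{\theta_j}\}}$, $\Psi_n=\phi^{cq}_{\{\sigma_{\theta_j}\}}$ attached to a finite parameter subset, use the idempotent $\delta_{\mathbb C^n}\otimes id$ to show block-diagonal ensembles suffice, and feed everything into Theorem~\ref{thm:rand_all} with parameter $2\epsilon$ (Lemma~\ref{lemma:qccq} giving $\|\Psi_n-\alpha\circ\Phi_n\|_\diamond=\max_j\|\sigma_{\theta_j}-\alpha(\rho_{\theta_j})\|_1$). You differ in two places, both legitimately. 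First, to globalize from finite subsets to all of $\Theta$ you use the finite intersection property of the closed sets $K_{\Theta_0}$ in the compact set $\Ce(\Ha,\Ka)$, whereas the paper runs a second minimax over the priors $\Pe_\Theta$; your route avoids checking continuity of $\alpha\mapsto\|\sigma_\theta-\alpha(\rho_\theta)\|_1$ and the convexity hypotheses of the minimax theorem, and is arguably cleaner. Second, you prove (i)$\Rightarrow$(iii) directly from the dual formula $\delta(\Psi_n,\Phi_n)=2\sup_{\gamma\ge0,\|\gamma\|^\diamond\le1}[\<\gamma,\Psi_n\>-\|\Phi_n\circ\gamma\|^\diamond]$, extracting the payoff operators $G_j=\gamma^*(|e_j\>\<e_j|)$ from $\gamma$ and using $\gamma\le\phi_\sigma\Rightarrow G_j\le\sigma_{jj}I$ to get $\sum_j\|G_j\|\le1$; the paper instead goes (i)$\Rightarrow$(ii)$\Rightarrow$(iii), encoding the payoff as a qc-map $\phi^{qc}_G$ and deriving condition (ii) of Theorem~\ref{thm:rand_all}. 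These are the same duality in different clothing, but your cycle (ii)$\Leftrightarrow$(iii), (iii)$\Rightarrow$(i), (i)$\Rightarrow$(iii) is logically complete. Two minor remarks: the normalization $G(\theta_j)=nG_j$ that you flag as delicate is actually immaterial, since the deficiency inequality is imposed per $\theta$ and is homogeneous in $G(\theta)$, so any positive scaling cancels; and your spectral-midpoint Hölder estimate in (iii)$\Rightarrow$(i) correctly supplies the factor $\tfrac12$ (traceless $\|X\|_1\le2\epsilon$ against $0\le\phi^*(G)\le\|G\|I$) that the paper's one-line computation leaves implicit.
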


\begin{proof} Suppose (i) and let $\{\theta_1,\dots,\theta_n\}\subseteq \Theta$. Let $\De=\Ka$ and let  $G:\Theta\to B(\Ka)^+$ be such that  $G(\theta)=0$ outside $\{\theta_1,\dots,\theta_n\}$. Then by (i), for any $\phi\in \Ce(\Ka,\Ka)$  there is some 
$\phi'\in \Ce(\Ha,\Ka)$ such that 
\begin{equation}\label{eq:deficiency_finite}
\sum_{j=1}^n\left[P_\Se(\theta_j,\phi ,G)- P_\Te(\theta_j,\phi',G)-\epsilon\|G(\theta_j)\|\right]\le 0.
\end{equation}
We may identify $G$ with the collection of operators $\{G(\theta_j),\ j=1,\dots,n\}$. Let also $\Se_0=\{\sigma_{\theta_1},\dots,\sigma_{\theta_n}\}$ and $\Te_0=\{\rho_{\theta_1},\dots,\rho_{\theta_n}\}$. Using Lemma \ref{lemma:qccq}, it is easy to see that (\ref{eq:deficiency_finite}) can be written as
\begin{equation}\label{eq:deficiency_finite_chan}
\<\phi^{qc}_G,\phi\circ\phi^{cq}_{\Se_0}\>\le \<\phi^{qc}_G,\phi'\circ\phi^{cq}_{\Te_0}\>+\epsilon \|\phi^{qc}_G\|^\diamond.
\end{equation}
Using  properties of the duality  $\<\cdot,\cdot\>$ and Proposition \ref{prop:properties}, we obtain for any $\phi\in \Ce(\Ka,\Ka)$,
\[
\<\phi,\phi^{cq}_{\Se_0}\circ\phi^{qc}_G\>\le \sup_{\phi'\in \Ce(\Ha,\Ka)}
\<\phi',\phi^{cq}_{\Te_0}\circ\phi^{qc}_G\>+\epsilon \|\phi^{qc}_G\|^\diamond=\|\phi^{cq}_{\Te_0}\circ\phi^{qc}_G\|^\diamond +\epsilon \|\phi^{qc}_G\|^\diamond,
\]
hence
\[
\|\phi^{cq}_{\Se_0}\circ\phi^{qc}_G\|^\diamond\le \|\phi^{cq}_{\Te_0}\circ\phi^{qc}_G\|^\diamond +\epsilon \|\phi^{qc}_G\|^\diamond
\]
holds for any completely positive qc-map $\phi^{qc}_G$. Since $\phi^{cq}_{\Se_0}\circ\gamma=\phi^{cq}_{\Se_0}\circ\delta_{\mathbb C^{n}}\circ\gamma$, similarly for $\phi^{cq}_{\Te_0}$, and since $\delta_{\mathbb C^{n}}\circ\gamma$ is a qc-map with 
$\|\delta_{\mathbb C^{n}}\circ\gamma\|^\diamond\le \|\gamma\|^\diamond$, we may replace $\phi^{qc}_G$ by any map $\gamma\in \Le(\Ka,\mathbb C^n)$ and obtain
\begin{equation}\label{eq:exps_ii}
\|\phi^{cq}_{\Se_0}\circ\gamma\|^\diamond\le \|\phi^{cq}_{\Te_0}\circ\gamma\|^\diamond +\epsilon \|\gamma\|^\diamond.
\end{equation}
Let now $\Ee$ be as in (ii),  then 
\[
(\phi^{cq}_{\Se_0}\otimes id)(\tau_i)=\sum_{j=1}^n\sigma_j\otimes \tau^j_{i}
\]
and similarly for $\phi^{cq}_{\Te_0}$, so that (ii) follows from (\ref{eq:exps_ii}) and Theorem \ref{thm:rand_crit}. We have proved that (i) implies (ii).

Next, suppose (ii) and let $\{\theta_1,\dots,\theta_n\}\subseteq \Theta$. Let $\Ee$ be any ensemble on $\mathbb C^n\otimes \Ka$. Then $\Ee':=(\delta_{\mathbb C^n}\otimes id)(\Ee)$ is an ensemble of block-diagonal states as in (ii) and 
$P_{succ}(\Ee')\le P_{succ}(\Ee)$, moreover, $(\psi\otimes id)(\Ee)=(\psi\otimes id)(\Ee')$ holds for any cq-channel $\psi$. Hence (ii) implies that
\[
P_{succ}((\phi^{cq}_{\Se_0}\otimes id)(\Ee))\le P_{succ}((\phi^{cq}_{\Se_0}\otimes id)(\Ee))+\epsilon P_{succ}(\Ee).
\]
By Theorem \ref{thm:rand_crit} and Lemma \ref{lemma:qccq}, this is equivalent to
\[
\min_{\alpha\in \Ce(\Ha,\Ka)}\sup_{j}\|\sigma_{\theta_j}-\alpha(\rho_{\theta_j})\|_1\le 2\epsilon.
\]
This clearly implies 
\[
\min_{\alpha\in \Ce(\Ha,\Ka)}\sum_{\theta}p(\theta)\|\sigma_\theta-\alpha(\rho_\theta)\|_1\le 2\epsilon
\]
for any $p\in \Pe_\Theta$ with support in $\{\theta_1,\dots,\theta_n\}$. Since this holds for any finite subset, 
\[
\sup_{p\in \Pe_\Theta}\min_{\alpha\in \Ce(\Ha,\Ka)}\sum_{\theta\in \Theta}p(\theta)\|\sigma_\theta-\alpha(\rho_\theta)\|_1\le 2\epsilon.
\]
Now we use the minimax theorem once more. For this, note that $\Pe_\Theta$ is a convex set, $\Ce(\Ha,\Ka)$ is compact and convex and the function $p\mapsto \sum_{\theta}p(\theta)\|\sigma_\theta-\alpha(\rho_\theta)\|_1$ is linear in $p$. It is also not difficult to see that the map $\alpha\mapsto  \|\sigma_\theta-\alpha(\rho_\theta)\|_1$ is continuous in the diamond norm. The minimax theorem can be applied and we obtain
\begin{align*}
\sup_{p\in \Pe_\Theta}\min_{\alpha\in \Ce(\Ha,\Ka)}\sum_{\theta}p(\theta)\|\sigma_\theta-\phi(\rho_\theta)\|_1&=\min_{\alpha\in \Ce(\Ha,\Ka)}\sup_{p\in \Pe_\Theta}\sum_{\theta}p(\theta)\|\sigma_\theta-\alpha(\rho_\theta)\|_1\\
&=\min_{\alpha\in \Ce(\Ha,\Ka)}\sup_{\theta}\|\sigma_\theta-\alpha(\rho_\theta)\|_1.
\end{align*}
Hence (ii) implies (iii).

Finally, suppose (iii) and let $(\De,G)$ be any quantum decision problem. Let $\phi\in \Ce(\Ka,\De)$ and put  $\phi'=\phi\circ\alpha$. Then for any $\theta\in \Theta$,
\begin{align*}
\Tr \phi(\sigma_\theta)G(\theta)-\Tr \phi'(\rho_\theta)G(\theta)&=\Tr \phi(\sigma_\theta-\alpha(\rho_\theta))G(\theta)\le \epsilon \|G(\theta)\|,
\end{align*}
which implies (i).

\end{proof}

Let $\Se$ and $\Te$ be experiments such that $\Se\preceq_0 \Te$. By the previous theorem, this is equivalent to existence of a channel $\alpha$ such that $\sigma_\theta=\alpha(\rho_\theta)$ for  $\theta\in \Theta$. In this case, we say that $\Se$ is a randomization of $\Te$.  An application of Theorem \ref{thm:rand_0} to  quantum statistical experiments shows that in this case, we may restrict to a special type of ensembles.

Below, we will say that an experiment $\Se_0=(\Ka, \{\tau_\theta, \theta\in \Theta_0\})$ is complete if the set $\{\tau_\theta, \theta\in \Theta_0\}$ spans $B(\Ka)$. If $\Theta_0$ is a finite set, then $\phi^{cq}_{\Se_0}=\phi^{cq}_{\{\tau_\theta, \theta\in \Theta_0\}}$ is a surjective channel.

\begin{coro}\label{coro:qrc_0} Let $\Se=(\Ka, \{\sigma_\theta,\ \theta\in \Theta\})$ and $\Te=(\Ha,\{\rho_\theta,\ \theta\in\Theta\})$  be quantum statistical  experiments. Let $\Se_0=(\Ka, \{\tau_1,\dots,\tau_N\})$ be a complete experiment. Then $\Se$ is a randomization of $\Te$ if and only if for any 
$\{\theta_1,\dots,\theta_n\}\subseteq \Theta$ and any collection $\{\Lambda_{j,l}^i\}$, $i=1,\dots,k$, $j=1,\dots, N$, $l=1,\dots,n$ of nonnegative numbers such that $\sum_{j,l}\Lambda^i_{j,l}=1$ for all $i$,
 we have
\[
P_{succ}(\{1/k, \sum_{j,l}\Lambda^i_{j,l}\sigma_{\theta_l}\otimes \tau_j\})\le P_{succ}(\{1/k, \sum_{j,l}\Lambda^i_{j,l}\rho_{\theta_l}\otimes \tau_j\}).
\] 

\end{coro}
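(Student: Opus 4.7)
The plan is to reduce, for each finite subset of parameters, to Corollary \ref{coro:rand_0} applied to suitable cq-channels, and then to patch the resulting channels together by a compactness argument. Fix $\{\theta_1,\dots,\theta_n\}\subseteq\Theta$ and set $\Phi_n:=\phi^{cq}_{\{\sigma_{\theta_1},\dots,\sigma_{\theta_n}\}}\in\Ce(\mathbb C^n,\Ka)$ and $\Psi_n:=\phi^{cq}_{\{\rho_{\theta_1},\dots,\rho_{\theta_n}\}}\in\Ce(\mathbb C^n,\Ha)$. By the composition rule for cq-maps (Remark \ref{rem:cqqc}), the existence of a channel $\alpha\in\Ce(\Ha,\Ka)$ with $\alpha(\rho_{\theta_i})=\sigma_{\theta_i}$ for $i=1,\dots,n$ is equivalent to $\Phi_n=\alpha\circ\Psi_n$, that is, $\Phi_n\preceq\Psi_n$.

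Since $\{\tau_1,\dots,\tau_N\}$ spans $B_h(\Ka)$ by completeness of $\Se_0$, Corollary \ref{coro:rand_0} applies with this spanning set and characterizes $\Phi_n\preceq\Psi_n$ by the inequality $P_{succ}((\Phi_n\otimes id)(\Ee))\le P_{succ}((\Psi_n\otimes id)(\Ee))$ for equiprobable ensembles $\Ee=\{1/k,\rho_i\}$ on $\mathbb C^n\otimes\Ka$ of the form $\rho_i=\sum_j\rho^i_j\otimes\tau_j$ with $\rho^i_j\in B(\mathbb C^n)^+$ and $\sum_j\Tr\rho^i_j=1$. The bookkeeping step identifies this condition with the hypothesis of Corollary \ref{coro:qrc_0}: since $\Phi_n$ and $\Psi_n$ are cq-maps, only the diagonal entries $\Lambda^i_{j,l}:=\langle e_l,\rho^i_j e_l\rangle$ (in the standard basis of $\mathbb C^n$) of the $\rho^i_j$'s reach the output, giving $(\Phi_n\otimes id)(\rho_i)=\sum_{j,l}\Lambda^i_{j,l}\sigma_{\theta_l}\otimes\tau_j$ and analogously for $\Psi_n$, while the normalization $\sum_j\Tr\rho^i_j=1$ translates to $\sum_{j,l}\Lambda^i_{j,l}=1$. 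Conversely, any prescribed family $\{\Lambda^i_{j,l}\}$ of the required type arises from the diagonal positive operators $\rho^i_j:=\sum_l\Lambda^i_{j,l}|e_l\rangle\langle e_l|$, so the two formulations are equivalent. Thus, under the hypothesis of Corollary \ref{coro:qrc_0}, I would obtain $\Phi_n\preceq\Psi_n$, i.e.\ a channel $\alpha_n\in\Ce(\Ha,\Ka)$ with $\alpha_n(\rho_{\theta_i})=\sigma_{\theta_i}$ for $i=1,\dots,n$.

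To extract a single channel $\alpha$ valid for all $\theta\in\Theta$, I would invoke compactness of $\Ce(\Ha,\Ka)$ through the finite intersection property: the closed sets $A_F:=\{\alpha\in\Ce(\Ha,\Ka):\alpha(\rho_\theta)=\sigma_\theta\ \forall\,\theta\in F\}$, for finite $F\subseteq\Theta$, are nonempty by the previous step and satisfy $A_F\cap A_{F'}\supseteq A_{F\cup F'}$, so $\bigcap_F A_F\ne\emptyset$ produces the desired $\alpha$. The converse direction is immediate: if $\sigma_\theta=\alpha(\rho_\theta)$ for all $\theta$, then $\Phi_n=\alpha\circ\Psi_n$, so $(\Phi_n\otimes id)(\Ee)=(\alpha\otimes id)((\Psi_n\otimes id)(\Ee))$ for every ensemble $\Ee$, and the inequality follows from monotonicity of $P_{succ}$ under post-processing (which is a consequence of Lemma \ref{lemma:dual_guess} combined with Proposition \ref{prop:properties}(iii)). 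The main point requiring care is the diagonal bookkeeping identifying the ensembles of Corollary \ref{coro:rand_0} with those of Corollary \ref{coro:qrc_0}; once this correspondence is written out, the result is a direct consequence of Corollary \ref{coro:rand_0} together with compactness of $\Ce(\Ha,\Ka)$.
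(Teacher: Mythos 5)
Your proof is correct and takes essentially the same route the paper intends: encode each finite subfamily $\{\rho_{\theta_l}\}$, $\{\sigma_{\theta_l}\}$ as cq-channels, apply Corollary~\ref{coro:rand_0} with the spanning set $\{\tau_1,\dots,\tau_N\}$, and observe that only the diagonal entries $\Lambda^i_{j,l}$ of the $\rho^i_j$ survive the cq-maps, so the restricted ensembles of Corollary~\ref{coro:rand_0} coincide with those in the statement. The only (minor) difference is that you extract the single channel $\alpha$ valid on all of $\Theta$ by the finite intersection property in the compact set $\Ce(\Ha,\Ka)$, whereas the paper reaches the same conclusion through the minimax step already carried out in the proof of Theorem~\ref{thm:rand_crit}; both are valid compactness arguments.
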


Let now $\Phi\in \Ce(\Ha,\Ka)$ and $\Psi\in \Ce(\Ha,\Ka')$ and let $\Te_0$ be  a complete experiment on $\Ha$. 
Then it is clear that $\Phi\preceq \Psi$ if and only if $\Phi(\Te_0)$ is a randomization of $\Psi(\Te_0)$. The next result 
follows easily  by an application of Corollary \ref{coro:qrc_0}.

\begin{coro} Let $\Phi\in \Ce(\Ha,\Ka)$, $\Psi\in \Ce(\Ha,\Ka')$. Let $\Te_0=(\Ha,\{\tau^\Ha_1,\dots,\tau^\Ha_{M}\})$, 
$\Se_0=(\Ka,\{\tau^\Ka_1,\dots,\tau^\Ka_N\})$ be complete experiments. Then $\delta(\Phi,\Psi)=0$ if and only if 
\[
P_{succ}((\Phi\otimes id_{\Ka_0})(\Ee))\le P_{succ}((\Psi\otimes id_{\Ka_0})(\Ee))
\]

holds for all ensembles of states of the form 
\[
\Ee=\{\lambda_i, \sum_{j,l}\Lambda^i_{j,l}\tau_l^\Ha\otimes \tau^\Ka_j\}.
\]

\end{coro}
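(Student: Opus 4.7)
The ``only if'' direction is immediate from data processing. If $\Phi=\alpha\circ\Psi$ for some $\alpha\in\Ce(\Ka',\Ka)$, then for any ensemble $\Ee$ on $\Ha\otimes\Ka$ we have $(\Phi\otimes id_\Ka)(\Ee)=(\alpha\otimes id_\Ka)\bigl((\Psi\otimes id_\Ka)(\Ee)\bigr)$, and the optimal success probability cannot increase under a channel, as follows by writing $P_{succ}=\|\phi_{(\cdot)}\|^\diamond$ via Lemma \ref{lemma:dual_guess} and using Proposition \ref{prop:properties}(iii).

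For the converse direction, the plan is to reduce to Corollary \ref{coro:qrc_0} by passing from the channels $\Phi,\Psi$ to the experiments $\Phi(\Te_0)$ and $\Psi(\Te_0)$. The key reduction is the equivalence
\[
\delta(\Phi,\Psi)=0 \quad\Longleftrightarrow\quad \Phi(\Te_0) \text{ is a randomization of } \Psi(\Te_0),
\]
which holds precisely because $\Te_0$ is complete. The nontrivial implication here is $\Leftarrow$: if some $\alpha\in\Ce(\Ka',\Ka)$ satisfies $\alpha(\Psi(\tau^\Ha_l))=\Phi(\tau^\Ha_l)$ for all $l=1,\dots,M$, then since $\{\tau^\Ha_l\}_{l=1}^{M}$ spans $B_h(\Ha)$ and both $\Phi$ and $\alpha\circ\Psi$ are real-linear, one concludes $\Phi=\alpha\circ\Psi$, so $\Phi\preceq\Psi$.

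Having made this reduction, I would apply Corollary \ref{coro:qrc_0} with $\Se=\Phi(\Te_0)$ on $\Ka$, $\Te=\Psi(\Te_0)$ on $\Ka'$, $\Theta=\{1,\dots,M\}$, and the complete experiment $\Se_0=(\Ka,\{\tau^\Ka_j\}_{j=1}^{N})$. The corollary characterizes ``$\Se$ is a randomization of $\Te$'' by a family of inequalities between success probabilities on ensembles of the form $\{1/k,\sum_{j,l}\Lambda^i_{j,l}\,\Phi(\tau^\Ha_l)\otimes\tau^\Ka_j\}$ on $\Ka\otimes\Ka$ and their counterparts with $\Phi$ replaced by $\Psi$ on $\Ka'\otimes\Ka$. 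By linearity, these states coincide with $(\Phi\otimes id_\Ka)\bigl(\sum_{j,l}\Lambda^i_{j,l}\,\tau^\Ha_l\otimes\tau^\Ka_j\bigr)$ and $(\Psi\otimes id_\Ka)\bigl(\sum_{j,l}\Lambda^i_{j,l}\,\tau^\Ha_l\otimes\tau^\Ka_j\bigr)$ respectively, so the conditions translate term-for-term into the statement of the corollary (with $\Ka_0=\Ka$).

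There is no real obstacle: the whole argument is a bookkeeping exercise built on the identification $\Phi(\tau^\Ha_l)\otimes\tau^\Ka_j=(\Phi\otimes id_\Ka)(\tau^\Ha_l\otimes\tau^\Ka_j)$ together with the completeness-driven equivalence above. The only conceptual point worth emphasizing is that completeness of $\Te_0$ is essential on the input side (to pass from agreement on the $\tau^\Ha_l$ to channel equality), while completeness of $\Se_0$ is what feeds into Corollary \ref{coro:qrc_0} on the output side.
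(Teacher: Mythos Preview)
Your proposal is correct and follows essentially the same approach as the paper: reduce $\delta(\Phi,\Psi)=0$ to ``$\Phi(\Te_0)$ is a randomization of $\Psi(\Te_0)$'' via completeness of $\Te_0$, then invoke Corollary~\ref{coro:qrc_0} with the complete experiment $\Se_0$ on $\Ka$ and unwind the resulting ensembles. The paper states this reduction in one line just before the corollary and leaves the rest implicit; you have simply supplied the details.
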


\section{Concluding remarks}

We have proved a version of the randomization criterion for quantum channels and applied the results to obtain a randomization criterion for quantum experiments. The deficiency $\delta(\Phi,\Psi)$ appears naturally in quantum information theory, for example in the definition of the approximately (anti)degradable channels \cite{suscre2015approximate}. Our results may be used to give an operational definition, similarly as it was done for antidegradable channels in \cite{budast2014game}.  Another possible application is to $\epsilon$-private and $\epsilon$-correctable channels \cite{kks2008complementarity}. 

Our proofs used properties of the diamond norm and its dual that can be obtained solely from the order structure given by completely positive maps and the trace preserving condition. This suggests the possibility to apply similar methods to more general situations, e. g. when complete positivity is replaced by other positivity assumptions, or to  more specific quantum protocols. 
One can also consider more general kinds of processings. Instead of post-processings where only the output is processed, it is possible to find conditions for approximation of one channel by pre-processings of the other. In the special case of qc-channels, or POVMs, this means that 
one POVM is $\epsilon$-cleaner than the other, which is an approximate version of the ordering of POVMs by cleanness, introduced in \cite{bkdpw2005clean}. The corresponding randomization criterion will be investigated in a forthcoming paper, see also \cite{jencova2015randomization}. 

More generally, the processing can consist of a combination of pre- and post-processing, also allowing some correlations between the input and the output, either classical or quantum. This would be  closer to the original classical definition by Shannon, \cite{shannon1958anote}. It seems that all these situations can be treated within the suggested framework.

Following the work of Raginsky \cite{raginsky2011shannon}, one can also define a deficiency measure based on more general distances, such as some generalized relative entropies.

Another challenging problem is the extension of these results to infinite dimensional Hilbert spaces, or to general von Neumann algebras. Some partial results in this direction were obtained in \cite{kalulu2013quantum}. Although the methods used in \cite{jencova2014base, jencova2015randomization} rely on finite dimensions, it seems plausible that some of the useful properties of the norms can be extended also to this case. All these problems are left for future work.

\section*{Acknowledgement}

This work was supported by the grants  VEGA 2/0069/16 and by Science and Technology Assistance Agency under the contract no. APVV-0178-11.


\begin{thebibliography}{10}

\bibitem{blackwell1951comparison}
D.~Blackwell.
\newblock {Comparison of experiments}.
\newblock {\em {Proc. 2nd Berkeley Symp. on Math. Stat. and Probab.}}, pages
  {93--102}, {1951}.

\bibitem{buscemi2012comparison}
F.~Buscemi.
\newblock {Comparison of Quantum Statistical Models: Equivalent Conditions for
  Sufficiency}.
\newblock {\em {Commun. Math. Phys.}}, {310}:{625--647}, {2012},
  arXiv:1004.3794.

\bibitem{buscemi2014fully}
F.~Buscemi.
\newblock {Fully quantum second-law-like statements from the theory of
  statistical comparisons}.
\newblock {arXiv:1505.00535}, {2014}.

\bibitem{buscemi2015degradable}
F.~Buscemi.
\newblock {Degradable channels, less noisy channels, and quantum statistical
  morphisms: an equivalence relation}.
\newblock {arXiv:1511.08893}, {2015}.

\bibitem{buda2016equivalence}
F.~Buscemi and N.~Datta.
\newblock {Equivalence between divisibility and monotonic decrease of
  information in classical and quantum stochastic processes}.
\newblock {\em {Phys. Rev. A}}, {93}:{012101}, {2016}, arXiv:1408.7062.

\bibitem{budast2014game}
F.~Buscemi, N.~Datta, and S.~Strelchuk.
\newblock {Game-theoretic characterization of antidegradable channels}.
\newblock {\em {J. Math. Phys.}}, {55}:{092202}, {2014}, arXiv:1404.0277.

\bibitem{bkdpw2005clean}
F.~Buscemi, M.~Keyl, G.~M. D'Ariano, P.~Perinotti, and R.~F. Werner.
\newblock Clean positive operator valued measures.
\newblock {\em J. Math. Phys.}, 46, 2005.

\bibitem{lecam1964sufficiency}
L.~Le Cam.
\newblock {Sufficiency and approximate sufficiency}.
\newblock {\em {Ann. Math. Statist.}}, {34}:{1419--1455}, {1964}.

\bibitem{lecam1986asymptotic}
L.~Le Cam.
\newblock {\em {Asymptotic Methods in Statistical Decision Theory}}.
\newblock {Springer Verlag}, {New York}, {1986}.

\bibitem{chefles2009quantum}
Anthony Chefles.
\newblock The quantum Blackwell theorem and minimum error state discrimination.
\newblock arXiv:0907.0866, 2009.

\bibitem{choi1975completely}
M.~D. Choi.
\newblock {Completely positive linear maps on complex matrices}.
\newblock {\em {Lin. Alg. Appl.}}, {10}:{285–290}, {1975}.

\bibitem{czko1981information}
I.~Czisz\'ar and J.~K\"orner.
\newblock {\em {Information Theory. Coding Theorems for Discrete Memoryless
  Channels}}.
\newblock {Academic Press}, {New York - San Francisco - London}, {1981}.

\bibitem{jencova2012comparison}
A.~Jen\v{c}ov\'a.
\newblock {Comparison of quantum binary experiments}.
\newblock {\em {Rep. Math. Phys.}}, {70}:{237--249}, {2012}.

\bibitem{jencova2014base}
A.~Jen\v{c}ov\'a.
\newblock Base norms and discrimination of generalized quantum channels.
\newblock {\em J. Math. Phys}, 55:022201, 2014, arXiv:1308.4030.

\bibitem{jencova2015randomization}
A.~Jen\v{c}ov\'a.
\newblock {Randomization theorems for quantum channels}.
\newblock {arXiv:1404.3900}, {2015}.

\bibitem{kalulu2013quantum}
K.~Kaniowski, K.~Lubnauer, and A.~Luczak.
\newblock {Quantum Blackwell-Sherman-Stein Theorem and Related Results}.
\newblock {\em {Open Systems \& Information Dynamics}}, {20}:{1350017}, {2013}.

\bibitem{kitaev1997quantum}
A.~Kitaev.
\newblock {Quantum computations: Algorithms and error correction}.
\newblock {\em {Russian Mathematical Surveys}}, {52}:{1191--1249}, {1997}.

\bibitem{koma1977comparison}
J.~K\"orner and K.~Marton.
\newblock Comparison of two noisy channels.
\newblock In {\em Colloquia Mathematica Societatis Janos Bolyai, Topics in
  Information Theory}, pages 411--424. North Holland, 1977.

\bibitem{kks2008complementarity}
D.~Kretschmann, D.~W. Kribs, and R.~W. Spekkens.
\newblock {Complementarity of private and correctable subsystems in quantum
  cryptography and error correction}.
\newblock {\em {Phys. Rev. A}}, {78}:{032330}, {2008}, arXiv:0711.3438.

\bibitem{matsumoto2010randomization}
K.~Matsumoto.
\newblock {A quantum version of randomization condition}.
\newblock {arXiv:1012.2650}, {2010}.

\bibitem{matsumoto2014anexample}
K.~Matsumoto.
\newblock {An example of a quantum statistical model which cannot be mapped to
  a less informative one by any trace preserving positive map}.
\newblock {arXiv:1409.5658}, {2014}.

\bibitem{raginsky2011shannon}
M.~Raginsky.
\newblock {Shannon meets Blackwell and Le Cam: Channels, codes, and statistical
  experiments}.
\newblock {\em {IEEE International Symposium on Information Theory
  Proceedings}}, pages {1220--1224}, {2011}.

\bibitem{shannon1958anote}
C.~E. Shannon.
\newblock {A note on a partial ordering for communication channels}.
\newblock {\em {Inform. Control}}, {1}:{390--397}, {1958}.

\bibitem{sherman1951theorem}
S.~Sherman.
\newblock {On a theorem of Hardy, Littlewood, P\'olya and Blackwell}.
\newblock {\em {Proc. Nat. Acad. Sciences}}, {37}:{826--831}, {1951}.

\bibitem{shmaya2005comparison}
E.~Shmaya.
\newblock {Comparison of information structures and completely positive maps}.
\newblock {\em {J. Phys. A-Math. Gen.}}, {38}:{9717--9727}, {2005}.

\bibitem{skowronek2011cones}
L.~Skowronek.
\newblock {Cones with a mapping cone symmetry in the finite-dimensional case}.
\newblock {\em {Lin. Alg. Appl.}}, {435}:{361--370}, {2011}, arXiv:1008.3237.

\bibitem{stein1951comparison}
C.~Stein.
\newblock {Notes on a Seminar on Theoretical Statistics. I. Comparison of
  experiments}.
\newblock Technical report, {University of Chicago}, {1951}.

\bibitem{strasser1985statistics}
H.~Strasser.
\newblock {\em {Mathematical Theory of Statistics}}.
\newblock {de Gruyter}, {Berlin, New York}, {1985}.

\bibitem{suscre2015approximate}
D.~Sutter, V.~B. Scholz, and R.~Renner.
\newblock {Approximate degradable quantum channels}.
\newblock {\em {IEEE International Symposium on Information Theory
  Proceedings}}, pages {2767--2771}, {2015}, arXiv:1412.0980.

\bibitem{torgersen1991comparison}
E.~Torgersen.
\newblock {\em {Comparison of Statistical Experiments}}.
\newblock {Cambridge University Press}, {1991}.

\bibitem{watrous2005notes}
J.~Watrous.
\newblock {Notes on super-operator norms induced by Schatten norms}.
\newblock {\em {Quantum Inf. Comput.}}, {5}:{58--68}, {2005},
  arXiv:quant-ph/0411077.

\end{thebibliography}

\end{document}